\documentclass[journal]{IEEEtran}

\usepackage[T1]{fontenc}

\usepackage{xcolor}
\usepackage{soul} 
\usepackage{subcaption}
\usepackage{listings}
\usepackage{geometry}
\usepackage{setspace}
\usepackage{booktabs}
\usepackage{makecell}
\usepackage{siunitx}
\usepackage{cuted}
\usepackage{biblatex}
\usepackage{graphicx}
\usepackage{float}
\newfloat{scheme}{htbp}{los}
\floatname{scheme}{Scheme}
\floatname{chart}{Chart}
\newfloat{graph}{htbp}{loh}

\usepackage{chemformula} 
\usepackage[version = 4]{mhchem} 

\usepackage{amsmath}
\usepackage{amssymb}
\usepackage{url}
\usepackage{tabularx}
\usepackage{xurl}
\usepackage{amsthm}
\newtheorem*{lemma}{Lemma}
\newtheorem*{claim}{Claim}

\newcommand{\mycomment}[1]{}

\usepackage{authblk}
\usepackage{hyperref}
\author[1]{Hamideh Bakhshi}
\author[1]{Hamid Abdollahi*}
\affil[1]{Department of Chemistry, Institute for Advanced Studies in Basic Sciences (IASBS), 444 Prof Yousef Sobouti Blvd, Zanjan, 45137-66731, Iran}
\author[2]{R\'{o}bert Rajk\'{o}*}
\affil[2]{University Research and Innovation Center (EKIK), University of \'{O}buda, B\'{e}csi \'{u}t 96/b, Budapest, H-1034, Hungary}

\title{Some surprising properties of \\ essential data points visualization}
\date{*Emails: abd@iasbs.ac.ir, rajko.robert@uni-obuda.hu}

\bibliography{ieee-template.bib}

\definecolor{codebg}{rgb}{0.96,0.96,0.96}
\definecolor{codecomment}{rgb}{0.13,0.55,0.13}
\definecolor{codekeyword}{rgb}{0.0,0.0,0.65}
\definecolor{codestring}{rgb}{0.63,0.13,0.13}

\lstdefinestyle{matlabstyle}{
	language=Matlab,
	backgroundcolor=\color{codebg},
	basicstyle=\footnotesize\ttfamily,
	keywordstyle=\color{codekeyword}\bfseries,
	commentstyle=\color{codecomment}\itshape,
	stringstyle=\color{codestring},
	numbers=left,
	numberstyle=\tiny\color{gray},
	stepnumber=1,
	numbersep=8pt,
	breaklines=true,
	breakatwhitespace=false,
	frame=single,
	rulecolor=\color{black!30},
	tabsize=2,
	showstringspaces=false,
	captionpos=b,
	linewidth=\dimexpr\textwidth-2\fboxsep\relax,
}

\begin{document}

\maketitle

\begin{abstract}
Essential Data Points (EDPs) --- the vertices of the convex hull of a bilinear
data matrix $D$ in its row space, column space, or both --- are widely used in
chemometrics to reduce the size of large data sets while nominally preserving
their underlying geometric structure. Using simulated three- and
two-component chromatographic/spectral data sets and a real
source-apportionment data set ($\mathbf{D} = \mathbf{C}\, \mathbf{A}^\mathsf{T}$), together with
Borgen--Rajk\'o plots, Procrustes analysis, and variance--covariance
comparisons, we show that this preservation is only \emph{partial}: row-wise
EDP reduction preserves the row-space geometry (inner and outer polygons)
exactly while distorting the column-space geometry, and column-wise reduction
shows the opposite behavior; joint row-and-column reduction distorts both. We
then give a rigorous, general proof --- based on the four fundamental
subspaces of a matrix and its singular value decomposition $\mathbf{D}=\mathbf{U}\,\mathbf{S}\,\mathbf{V}^\mathsf{T}$
--- that the subspace which is \emph{not} being reduced is always preserved
exactly, up to an orthogonal rotation, whereas the subspace whose ambient
dimension shrinks is related to the original only through a general,
non-orthogonal isomorphism. This distinction is confirmed numerically to
machine precision ($\sim 10^{-14}$--$10^{-16}$) on the real data set, and a
deliberate negative control confirms that the "ambient-shrinking" map is
genuinely non-orthogonal (residual $\approx 1$). These results demonstrate
that the apparent rotation of an EDP-reduced polygon relative to the original
is not, in general, a rigid rotation, and that visual or numerical
comparisons between an EDP-reduced data set and the original data require an
explicit, mode-dependent change-of-basis correction before any geometric or
statistical conclusion can be drawn. A MATLAB implementation of this
correction is provided.
\end{abstract}


\begin{IEEEkeywords}

Essential Data Points; Convex Hull; Singular Value Decomposition; Four
Fundamental Subspaces; Borgen--Rajk\'o Plot; Multivariate Curve Resolution;
Data Reduction; Procrustes Analysis
\end{IEEEkeywords}


\section{Introduction}


The term data reduction methods refers to a collection of mathematical algorithms that are used to reduce the dimensionality and/or size of large data sets (\cite{Pearson1901,Scholkopf1998,vandermaaten2008,Ivosev2008,ayesha2020,KHODADADIKARIMVAND2023,wani2025,GARISO2025,cordina2026}). Various algorithms have been proposed for different purposes, including variable selection and sample selection, in order to eliminate redundant information. Among these approaches, methods capable of preserving the geometric structure of data points have attracted considerable attention.

An effective strategy for data reduction in bilinear data sets is the selection of Essential Data Points (EDPs) (\cite{Ghaffari2019,Ruckebusch2020,Ghaffari2021}), which represent the most informative samples and variables while preserving the underlying data structure. A common approach for identifying these points is based on the Convex Hull (CH) constructed in the normalized data space. Probably, the first use of CH in self-modeling/multivariate curve resolution (S/MCR) was in~\cite{rajko2005}.

More specifically, the convex hull of a bilinear data set is defined as the smallest convex polytope that encloses all data points so that every point lies either on the boundary or within the interior of the polytope. This polytope is formed by a unique subset of data points that create the tightest boundary that surrounds the entire data cloud and can be determined independently within each data subspace. 
In analytical chemistry/chemometrics several attempts have been made to use of CH before EDP was introduced (\cite{Fernandez2002,fernandez2003,Jin2003,Jin2003Delaunay,JIN2006,corona2010}).
The data points corresponding to the vertices of the convex hull are called EDPs, as they retain the fundamental structural information of the data set. Consequently, these points can be used as a reduced yet representative subset of samples for subsequent data analysis and modeling tasks.

Assume that $\mathbf{D}$ is the original data matrix with dimensions $m \times n$, it can be reduced in the row mode, the column mode, or jointly in both modes. Let $p$ and $q$ denote the numbers of essential data points identified in the row and column spaces, respectively. In row-wise reduction, the data matrix is transformed into a matrix of size ($\mathbf{D}_{p \times n}$), in column-wise reduction, the dimensions of the matrix become ($\mathbf{D}_{m \times q}$) and finally, when data reduction is performed jointly in both modes, the resulting matrix has dimensions of ($\mathbf{D}_{p \times q}$).

The main objective of this paper is to compare the original data matrix with the
reduced data matrices and to evaluate the structural modifications that may arise during the data reduction process. To investigate the effect of data reduction on three-component bilinear data, Borgen-Rajk\'{o} plots were used as a visualization and evaluation tool.

The Borgen-Rajk\'{o} plot can be regarded as a kind of “microscope" to examine the detailed geometry of the abstract space in a three-component system~\cite{BORGEN1985,rajko2005}. This standard approach provides fundamental insights into the microstructure of a three-component data set~\cite{rajko2017}. The Borgen-Rajk\'{o} plot consists of several regions. Due to the non-negativity constraint of the data, the abstract space is bounded by a polygon, called the outer polygon. Inside this outer polygon, another convex
polygon is located, called the inner polygon, which is defined by the convex hull of the coordinates of the response vectors of the data matrix. In this framework, the purest rows or columns of the data set correspond to points lying on the facets of the inner polygon. In addition, three admissible regions are defined in the abstract space, each representing a specific portion of the feasible region of the system. Each admissible region corresponds to the set of feasible solutions associated with one of the components of the system.

However, the main question that arises is which types of information are lost during data reduction based on the essential data points and which types of information from the original data are preserved.

The structure of the paper is following: (a) after this Introduction section, some observations follow, with which we can illustrate the basic problem of the simple use of EDPs visualization, (b) summarizing theoretical considerations using redundant, understandable well-presented form, (c) illustrating the proper transformation regarding to EDPs, (d) some conclusions end the paper.

\section{Experimental}\label{sec:exprmntl}
\subsection{Three-component dataset}

To address the question of which information is preserved and which is altered or lost when data reduction is performed using essential data points, a noise-free three-component chromatographic dataset was simulated. The corresponding elution and spectral profiles of the dataset are presented in Fig.~\ref{fig1:dataset}.

\begin{figure}
  \centering
     \includegraphics[width=0.45\textwidth]{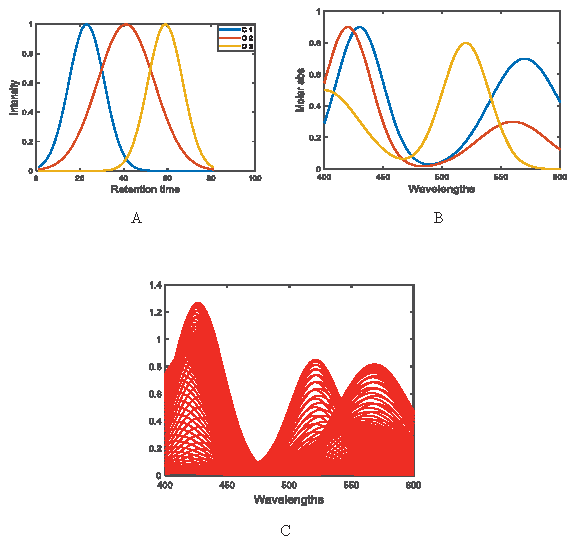} \\
  \caption{Three-component dataset. A) concentration of components in the samples, B) pure spectral profiles and C) simulated noise free data}
  \label{fig1:dataset}
 \end{figure}


The original data matrix had dimensions of $81 \times 201$. The essential points were identified by computing the convex hull vertices using MATLAB's convhull function. The analysis revealed 58 essential rows and 70 essential columns. Based on these results, three data-reduction scenarios were investigated. In the first scenario, data reduction was performed only in the row mode, resulting in a reduced matrix of size $58 \times 201$. In the second scenario, reduction was applied only in the column mode, yielding a matrix of size $81 \times 70$. Finally, jointly reduction in both row and column modes produced a reduced matrix of size $58 \times 70$. The Borgen-Rajk\'{o} plots corresponding to the original dataset and the three data-reduction scenarios, for both the row and column spaces, are presented in Fig.~\ref{fig2:reduced_data}.

\begin{figure}
  \centering
  \includegraphics[width=0.45\textwidth]{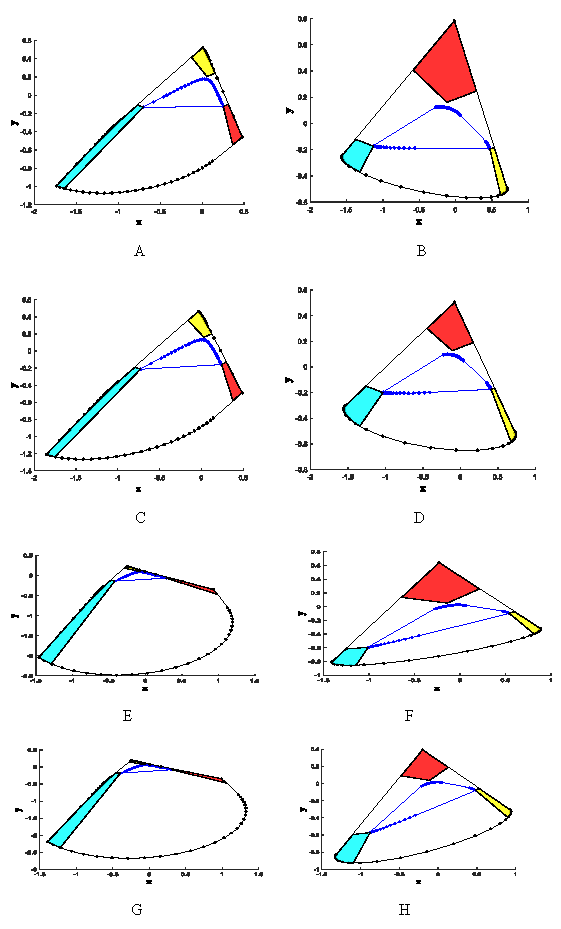} \\
  \caption{Borgen-Rajk\'{o} plots of the original and reduced datasets in the row and column spaces. (A) Original dataset in the row space; (B) original dataset in the column space; (C) row-wise reduced dataset ($\mathbf{D}_{58\times201}$) in the row space; (D) row-wise reduced dataset ($\mathbf{D}_{58\times201}$) in the column space; (E) column-wise reduced dataset ($\mathbf{D}_{81\times70}$) in the row space; (F) column-wise reduced dataset ($\mathbf{D}_{81\times70}$) in the column space; (G) jointly reduced dataset ($\mathbf{D}_{58\times70}$) in the row space; and (H) jointly reduced dataset ($\mathbf{D}_{58\times70}$) in the column space.}
  \label{fig2:reduced_data}
\end{figure}

The objective of this section is to investigate whether the geometric pattern of the essential data points is preserved after data reduction or undergoes significant changes. Since the original and reduced datasets have different dimensions, a direct comparison of their projections is not feasible. Therefore, Procrustes analysis is employed to compare their geometric configurations. Procrustes analysis is a statistical shape analysis method used to compare the geometric configuration of two or more sets of points~\cite{Carlosena1995,ANDRADE2004,kucheryavskiy2020,andreella2022,GONCALVES2023}. The main idea is that, for a meaningful comparison of “shapes,” the effects of non-shape-related factors, including translation, rotation, and scaling, must first be removed. This procedure is typically carried out in three steps:

\begin{enumerate}
  \item Translation of the points so that their centroid lies at the origin,
  \item Normalization to achieve uniform scaling,
  \item Optimal rotation to minimize the distance between the two-point configurations.
\end{enumerate}

In this study, this procedure was performed using the \verb+procrustes+ function in MATLAB.

Fig.~\ref{fig3:procrustes} presents a simplified representation of the Borgen-Rajk\'{o} plots, consisting only of the inner and outer polygons, for the original dataset and all reduced-data scenarios superimposed on the same graph. This representation enables a direct comparison of the geometric structure, shape, and relative positions of the inner and outer polygons between the original and reduced datasets.

\begin{figure}
  \centering
  \includegraphics[width=0.45\textwidth]{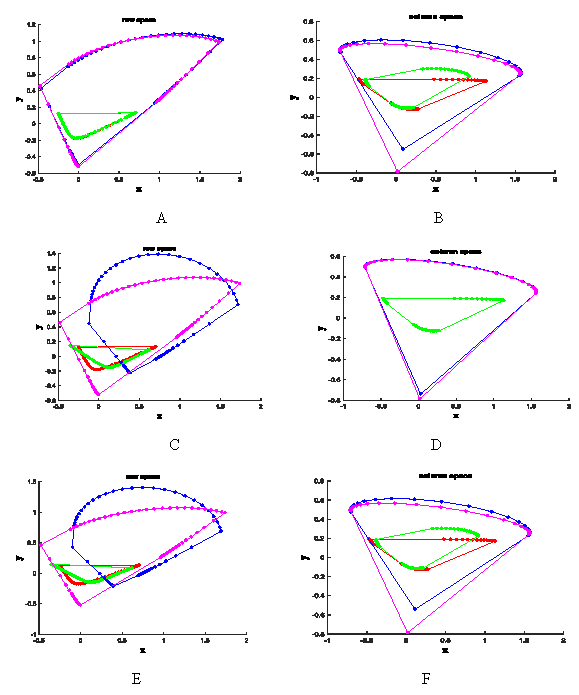} \\
  \caption{Superimposed inner and outer polygons of the original and reduced datasets after Procrustes transformation. In all panels, the inner and outer polygons of the original dataset were used as the reference structures, and the corresponding polygons of the reduced datasets were aligned using Procrustes transformation to facilitate geometric comparison. The inner polygon of the original data is shown in red, the outer polygon of the original data in pink, the inner polygon of the reduced data after Procrustes transformation in green, and the outer polygon of the reduced data after Procrustes transformation in blue. The first row corresponds to the row-wise reduced dataset ($\mathbf{D}_{58\times201}$): A) row space and B) column space, the second row corresponds to the column-wise reduced dataset ($\mathbf{D}_{81\times70}$): C) row space and D) column space, and the third row corresponds to the jointly reduced dataset ($\mathbf{D}_{58\times70}$): E) row space and F) column space.}
  \label{fig3:procrustes}
\end{figure}

As can be seen in Fig~\ref{fig3:procrustes}, for some data-reduction scenarios, the inner and outer polygons of the reduced datasets closely coincide with their corresponding polygons in the original dataset after Procrustes transformation. In contrast, for the other data reduction scenarios, the inner and outer polygons of the original and reduced datasets do not coincide completely, even after Procrustes alignment. Since Procrustes transformation removes the effects of translation, rotation, and scaling, the remaining discrepancies can be attributed to intrinsic differences in the geometric structure of the polygons. In the following sections, the effects of data reduction on the inner polygons, outer polygons, and AFS values will be examined separately and in greater detail.

\subsubsection{Changes in the inner polygons}

The first row of Fig~\ref{fig3:procrustes} corresponds to row-wise data reduction ($\mathbf{D}_{58\times201}$). As can be seen in Fig~\ref{fig3:procrustes}A, the inner and outer polygons of the reduced dataset are perfectly superimposed on those of the original dataset in the row space. In contrast, in the column space (Fig~\ref{fig3:procrustes}B), the inner and outer polygons of the original and reduced datasets do not fully coincide. The inner polygon represents the convex hull of the data points in the row or column space, and its vertices correspond to the essential points of the dataset. Therefore, the lack of coincidence between the inner polygons in the column space indicates that the geometric pattern of the essential points has changed following data reduction. The results indicate that, although data reduction does not alter the number of essential points, it changes their distribution pattern and geometric relationships. Therefore, to further investigate the effect of data reduction on the geometric structure of the dataset, the variance--covariance matrices corresponding to the essential points in the original and reduced datasets are compared and analyzed in the following section.

Since the number of essential rows is 58, the corresponding variance--covariance matrix in the row space is an ($58 \times 58$) matrix. Fig.~\ref{fig4:varcovar} presents the heat maps of the variance--covariance matrices of the essential points for the original and reduced datasets in the row space. In addition, the heat map of the residual matrix, obtained as the difference between the variance--covariance matrices of the reduced and original datasets, is also shown to facilitate the assessment of the differences between the two matrices.

\begin{figure}
  \centering
  \includegraphics[width=0.3\textwidth]{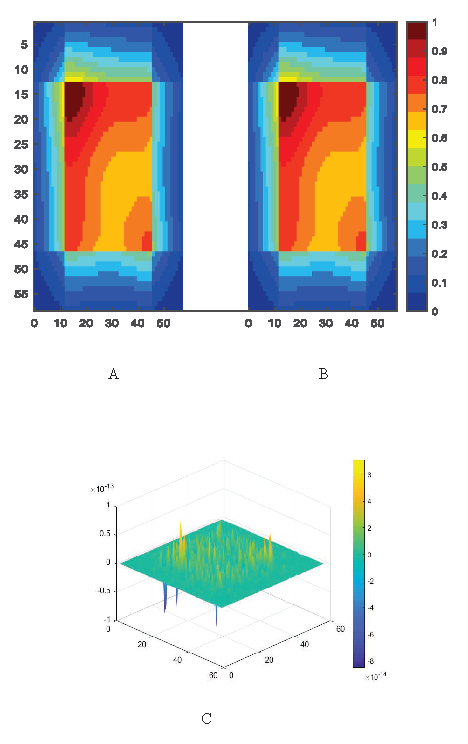} \\
  \caption{A) the variance--covariance matrices corresponding to the essential points in the original and B) the variance–covariance matrices of reduced datasets C) residual matrix obtained from the difference between the two variance--covariance matrices (variance--covariance matrices corresponding to the essential points for the original dataset and reduced data)}
  \label{fig4:varcovar}
\end{figure}

\begin{figure}
  \centering
  \includegraphics[width=0.3\textwidth]{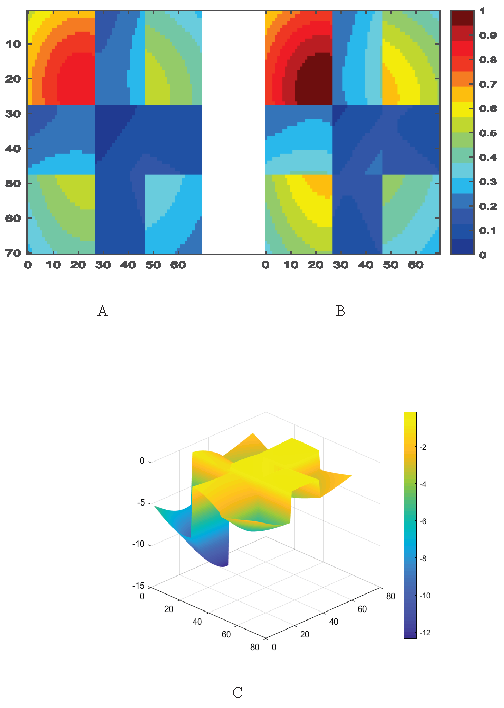} \\
  \caption{A) the variance--covariance matrices corresponding to the essential points in the original and B) the variance--covariance matrices of reduced datasets C) residual matrix obtained from the difference between the two variance–covariance matrices (variance--covariance matrices corresponding to the essential points for the original dataset and reduced
data)}
  \label{fig5:varcovar}
\end{figure}

As shown in Fig.~\ref{fig4:varcovar}, the variance--
covariance matrices corresponding to the essential points in the original and reduced datasets are identical. Furthermore, the near-zero values in the residual matrix confirm that the data reduction process does not alter the variance–covariance structure of the essential points. This result is expected because, in row-wise reduction, the selected 58 essential rows are preserved exactly in the reduced dataset, without any modification to their elements. The diagonal elements of the variance--covariance matrix correspond to the inner products of each row with itself, whereas the off-diagonal elements represent the inner products between different rows. Since the inner product of two vectors depends on their magnitudes and the cosine of the angle between them, the off-diagonal elements are directly related to the angular relationships among the rows. Therefore, the identical variance--covariance matrices indicate that the angular relationships among all essential rows are preserved after data reduction. Consequently, the geometric pattern of the essential points remains unchanged, leading to complete overlap of the inner polygons in the row space.

As shown in Fig.~\ref{fig4:varcovar}, the variance--covariance matrices corresponding to the essential points in the original and reduced datasets are identical. Furthermore, the near-zero values in the residual matrix confirm that the data reduction process does not alter the variance--covariance structure of the essential points. This result is expected because, in row-wise reduction, the selected 58 essential rows are preserved exactly in the reduced dataset, without any modification to their elements. The diagonal elements of the variance--covariance matrix correspond to the inner products of each row with itself, whereas the off-diagonal elements represent the inner products between different rows. Since the inner product of two vectors depends on their magnitudes and the cosine of the angle between them, the off-diagonal elements are directly related to the angular relationships among the rows. Therefore, the identical variance--covariance matrices indicate that the angular relationships among all essential rows are preserved after data reduction. Consequently, the geometric pattern of the essential points remains unchanged, leading to complete overlap of the inner polygons in the row space.

In contrast, the number of essential columns is 70, and the corresponding covariance matrix is an ($70 \times 70$) matrix. Fig.~\ref{fig5:varcovar} presents the heat maps of the variance–covariance matrices of the essential points for the original and reduced datasets in the column space. In addition, the residual matrix, obtained as the difference between the variance–covariance matrices of the reduced and original datasets, is also shown to facilitate the assessment of the differences between the two matrices. As can be seen, the variance--covariance matrices in the column space differ between the original and reduced datasets. This difference indicates that the geometric relationships among the essential columns have changed following data reduction. As a result, the distribution pattern of the essential points is no longer preserved in the column space, which is reflected in the fact that the inner polygons of the original and reduced datasets are not perfectly superimposed in Fig~\ref{fig3:procrustes}B.

The second row of Fig~\ref{fig3:procrustes} corresponds to column-wise data reduction ($\mathbf{D}_{81\times70}$).
As can be seen, the behavior is exactly opposite to that observed for row-wise reduction. In this case, the inner polygons of the original and reduced datasets
coincide exactly in the column space (Fig~\ref{fig3:procrustes}D), whereas noticeable differences are observed in the row space (Fig~\ref{fig3:procrustes}C). In this form of data reduction, the reduced dataset consists of a subset of the original columns that are transferred directly to the reduced matrix without modification. Consequently, the variance--covariance matrix associated with the essential columns remains identical in the original and reduced datasets. Since the geometric relationships among the essential columns are preserved, the distribution pattern of these points remains unchanged, resulting in the complete coincidence of the inner polygons in the column space. In contrast, the variance--covariance matrix associated with the essential rows differs between the original and reduced datasets. This indicates that the geometric relationships among the essential rows have been altered as a consequence of column-wise reduction. Since the off-diagonal elements of the variance--covariance matrix are directly related to the angular relationships between the rows, the observed differences imply a modification of the geometric pattern in the row space. As a result, although the essential columns and their geometric configuration are preserved, the pattern of the essential rows changes after data reduction. These observations further support the conclusion that preservation of the inner polygon depends not only on retaining the essential points themselves but also on maintaining the variance--covariance structure and geometric relationships among those points.

The third row of Fig~\ref{fig3:procrustes} corresponds to jointly data reduction in both modes ($\mathbf{D}_{58\times70}$). As can be seen, the inner polygons of the original and reduced datasets differ in both the row and column spaces. Unlike the previous two cases, where the geometric structure was preserved in one of the spaces. Based on the previous discussion, this behavior can be attributed to changes in the variance--covariance structures of both the row and column spaces. Since neither the essential rows nor the essential columns are preserved in their original geometric context, the variance--covariance matrices associated with the essential points differ between the original and reduced datasets in both spaces. These differences indicate that the geometric relationships among the essential points have been altered as a consequence of data reduction.

\subsubsection{Changes in the outer polygons}
According to singular value decomposition (SVD), the data matrix ($\mathbf{D}$) can be decomposed as $\mathbf{D} = \mathbf{U}\, \mathbf{S}\, \mathbf{V}^\text{T}$. where ($\mathbf{U}$) and ($\mathbf{V}$) contain the left and right singular vectors, respectively, and ($\mathbf{S}$) is a diagonal matrix containing the singular values.
The outer boundaries are defined based on the duality principle. In this framework, the outer boundaries in the ($\mathbf{U}$) space are estimated using ($\mathbf{V}$), while the outer boundaries in the ($\mathbf{V}$) space are determined based on ($\mathbf{U}$). The matrices ($\mathbf{U}$) and ($\mathbf{V}$) represent the basis vectors of the column space and row space, respectively. 
Similarly, the matrices ($\mathbf{C}$) and ($\mathbf{A}$) can also be interpreted as basis representations of the column and row spaces of the data. 

The first row of Fig~\ref{fig3:procrustes} corresponds to row-wise data reduction ($\mathbf{D}_{58\times201}$). In this case, the basis vectors of the row space remain exactly the same in both the original and reduced datasets. Considering the matrix factorization described by $\mathbf{D} = \mathbf{C}\,\mathbf{A}^\text{T}$. when the data matrix is reduced in the row mode, the matrix $\mathbf{A}$ remains unchanged. Consequently, the basis vectors spanning the row space are preserved, and the matrix $\mathbf{V}$, which represents the basis vectors of the row space, is identical for the original and reduced datasets. Therefore, since the matrices ($\mathbf{V}$) of the original and reduced datasets are identical, the outer polygons generated by these matrices are also identical. As a result, as shown in Fig~\ref{fig3:procrustes}A, the outer polygons of the original and reduced datasets coincide completely in the row space.

However, as shown in Fig~\ref{fig3:procrustes}B, the outer polygons of the original and reduced datasets do not coincide in the concentration space for this type of data reduction. According to $\mathbf{D} = \mathbf{C}\,\mathbf{A}^\text{T}$, the matrix ($\mathbf{C}$), which represents the basis vectors of the concentration space, is composed of dimension ($81\times3$) for the original dataset, whereas for the reduced dataset it consists of dimension ($58\times3$). To evaluate the similarity between these basis vectors, the correlation matrices corresponding to the ($\mathbf{C}$) matrices of the original and reduced datasets are reported in Table~\ref{tab:1.1}(a) and (b). This result indicates that the $\mathbf{C}$ matrix of the reduced dataset is not identical to that of the original dataset. Consequently, the $\mathbf{U}$ matrix, which defines the basis vectors of the column space, also differs between the two datasets. Since the outer polygon in the column space is determined by the basis vectors of $\mathbf{U}$, this difference leads to different outer boundaries (outer polygons) for the original and reduced datasets.

The second row of Fig~\ref{fig3:procrustes} corresponds to column-wise data reduction ($\mathbf{D}_{81\times70}$). As shown in Fig~\ref{fig3:procrustes}D, The outer polygons of the original and reduced datasets coincide completely in the concentration space. In this type of reduction, the matrix $\mathbf{C}$ of the reduced dataset is identical to that of the original dataset. Consequently, the matrix $\mathbf{U}$, which represents the basis vectors of the concentration space, is also preserved. Since the outer polygon in the concentration space is defined by the matrix $\mathbf{U}$, identical $\mathbf{U}$ matrices lead to identical outer polygons. This explains why the outer polygons of the original and reduced datasets are perfectly superimposed in Fig~\ref{fig3:procrustes}D. In contrast, the outer polygons of the original and reduced datasets differ in the spectral space, as shown in Fig~\ref{fig3:procrustes}C. The matrix $\mathbf{A}$ of the original dataset consists of three row vectors of dimension ($3\times200$), whereas the corresponding matrix $\mathbf{A}$ of the reduced dataset consists of three row vectors of dimension ($3\times70$). As a result, the basis vectors describing the spectral space are modified after data reduction. These changes propagate to the matrix $\mathbf{V}$, which represents the basis vectors of the spectral space, leading to differences between the $\mathbf{V}$ matrices of the original and reduced datasets. Consequently, the outer polygons defined by these matrices are no longer identical, resulting in the differences observed in Fig~\ref{fig3:procrustes}C.
The third row of Fig~\ref{fig3:procrustes} corresponds to jointly reduction in both modes ($\mathbf{D}_{58\times70}$). In this case, both matrices $\mathbf{C}$ and $\mathbf{A}$ differ from their counterparts in the original dataset. Consequently, the basis vectors of both the concentration and spectral spaces are altered, leading to changes in both $\mathbf{U}$ and $\mathbf{V}$. According to the duality principle, these changes directly affect the outer polygons defined in the two spaces. Therefore, the outer polygons of the original and reduced datasets
differ in both the concentration and spectral spaces, as observed in the third row of Fig~\ref{fig3:procrustes}.


\begin{table*}
\centering
\caption{Correlation matrices corresponding to the $\mathbf{C}$ matrices of the (a) main and (b) reduced data}
\begin{tabular}{cc}
\begin{minipage}[t]{0.3\textwidth}
\subcaption{}
\[
\begin{pmatrix}
1 & 0.0527 & -0.5293\\
0.0527 & 1 & 0.0508\\
-0.5293 & 0.0508 & 1
\end{pmatrix}
\]
\end{minipage}
&
\begin{minipage}[t]{0.3\textwidth}
\subcaption{}
\[
\begin{pmatrix}
1 & 0.1793 & -0.4284\\
0.1793 & 1 & 0.1972\\
-0.4284 & 0.1972 & 1
\end{pmatrix}
\]
\end{minipage}
\end{tabular}
\label{tab:1.1}
\end{table*}

\subsubsection{Changes in the AFSs}

The AFSs are bounded by the inner and outer polygons. Therefore, any change in the shape, size of the inner and outer polygons directly affects the corresponding AFSs. Consequently, modifications in the geometry of the feasible regions resulting from data reduction are expected to be reflected in the shape and extent of the AFSs. In cases where the inner and outer polygons remain unchanged, the corresponding AFSs also preserve their geometry. In contrast, whenever changes occur in the inner and outer polygons, the corresponding AFSs exhibit noticeable differences in both shape and size. 

\subsection{Two-component data sets}
To investigate the effect of data reduction in two component data sets, a simulated two-component calibration dataset was generated. The corresponding spectral and concentration profiles are presented in Fig~\ref{fig6:2compdataset}. The original data matrix consisted of 100 rows and 200 columns. Data reduction was then performed under three different scenarios. In the first scenario, two essential rows were selected, resulting in a reduced data matrix of size $2 \times 200$. In the second scenario, two essential columns were selected, yielding a reduced matrix of size $100 \times 2$. In the third scenario, data reduction was simultaneously applied in both modes, producing a final reduced matrix of size $2 \times 2$. An interesting observation is that, for all two-component datasets, the data reduction procedure always results in a $2 \times 2$ reduced matrix. In contrast, for datasets containing more than two components, no general rule exists for the dimensions of the reduced matrix, as they depend on the geometric structure of the dataset and the number of essential points in both the row and column modes.
Following each data reduction strategy, the angles between the inner boundaries, the angles between the outer boundaries, and the angles between the inner and outer boundaries (AFSs) for each component were calculated. The obtained results for all cases are summarized in Table~\ref{tab:2.1} and \ref{tab:2.2}. In the following section, the variations in the inner and outer boundaries, as well as the changes in the AFSs, will be systematically analyzed and discussed. 


\begin{figure}
  \centering
  \includegraphics[width=0.45\textwidth]{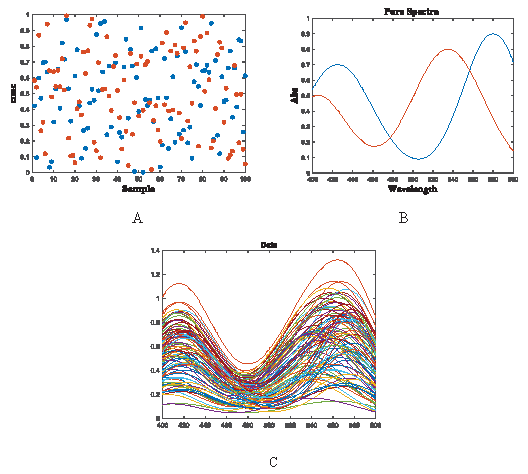} \\
  \caption{Two-component dataset. A) concentration of components in the samples, B) pure spectral profiles and C) simulated noise-free data}
  \label{fig6:2compdataset}
\end{figure}

\begin{table*}
  \centering
  \caption{Estimated angles between the inner boundaries, the angles between the outer boundaries, and the angles between the inner and outer boundaries for each component in spectral space}
  \label{tab:2.1}
  \begin{tabular}{ccccc}
  \toprule
  Data matrix &
  \makecell{Angle between\\inners} &
  \makecell{Angle between\\outers} &
  \makecell{Angle between outer and \\inner of 1st component}&
  \makecell{Angle between outer and \\inner of 2nd component}\\
  \midrule
  $100\times200$ & 41.01 & 57.96 & 10.31 & 6.65 \\
  $2\times200$ & 41.01 & 57.96 & 10.31 & 6.65 \\
  $100\times2$ & 68.13 & 90.00 & 13.11 & 8.76 \\
  $2\times2$ & 68.13 & 90.00 & 13.11 & 8.76 \\
  \bottomrule
  \end{tabular}
  \end{table*}

\begin{table*}
  \centering
  \caption{Estimated angles between the inner boundaries, the angles between the outer boundaries, and the angles between the inner and outer boundaries for each component in concentration space}
  \label{tab:2.2}
  \begin{tabular}{ccccc}
  \toprule
  Data matrix &
  \makecell{Angle between\\inners} &
  \makecell{Angle between\\outers} &
  \makecell{Angle between outer and \\inner of 1st component}&
  \makecell{Angle between outer and \\inner of 2nd component}\\
  \midrule
  $100\times200$ & 27.57 & 39.62 & 5.72 & 6.32 \\
  $2\times200$ & 64.83 & 90.00 & 19.32 & 5.85 \\
  $100\times2$ & 27.57 & 39.62 & 5.72 & 6.32 \\
  $2\times2$ & 64.83 & 90.00 & 19.32 & 5.85 \\
  \bottomrule
  \end{tabular}
  \end{table*}

\subsubsection{Changes in the inner boundaries}
 The objective of this analysis is to investigate the changes introduced by data reduction relative to the original dataset. By comparing the angles between the inner boundaries obtained from the original data with those obtained from the reduced datasets, it can be observed that certain data reduction strategies alter the angle between the inner boundaries. When the two essential rows are selected, the resulting reduced data matrix has dimensions of $2 \times 200$. In this case, the angle between the inner boundaries remains identical to that obtained from the original dataset in spectral space (Table~\ref{tab:2.1}, first two rows). This observation can be explained by examining the variance--covariance matrix in the row space. As shown in Table~\ref{tab:2.3and4}(a) and (b), the covariance submatrix corresponding to the two essential rows is identical in both the original dataset and the reduced dataset. Consequently, the geometric relationship between these two rows is preserved after data reduction.  Since both the essential rows and their corresponding variance--covariance structure remain unchanged, the angle between the inner boundaries is also preserved. Therefore, reducing the data matrix from $100 \times 200$ to $2 \times 200$ does not affect the inner-boundaries angle associated with these rows, leading to the same value observed for the original dataset. However, by comparing the first two rows of Table~\ref{tab:2.2}, it can be observed that when the data matrix is reduced to a size of $2 \times 200$, the angle between the inner boundaries in the concentration space changes compared to the original dataset. 

\begin{table}
\centering
\caption{Variance--covariance matrix corresponding essential point for (a) reduced and (b) original data in row space.}
\begin{tabular}{cc}
\begin{minipage}[t]{0.15\textwidth}
\subcaption{}
\[
\begin{pmatrix}
67.09 & 57.56\\
57.56 & 49.62
\end{pmatrix}
\]
\end{minipage}
&
\begin{minipage}[t]{0.15\textwidth}
\subcaption{}
\[
\begin{pmatrix}
67.09 & 57.56\\
57.56 & 49.62
\end{pmatrix}
\]
\end{minipage}
\end{tabular}
\label{tab:2.3and4}
\end{table}

The variance--covariance matrices of both the original and the reduced datasets in the column space are $200 \times 200$ matrices. However, these matrices are not related by a simple scalar scaling, and their elements are altered in a non-uniform (asymmetric) manner. As a consequence, the principal directions in the concentration space are modified, leading to a change in the orientation of the inner boundaries. 
When the data matrix is reduced to a size of $100 \times 2$, the opposite behavior is observed. In this case, the variance--covariance matrix corresponding to the essential points in the column space remains identical to that of the original dataset. Consequently, the angle between the inner boundaries in the concentration space is preserved, and no change is observed compared to the original data. In contrast, the variance--covariance matrix in the row space ($100 \times 100$) is altered due to the reduction process. As a result, the covariance structures of the original and reduced datasets in the row space are no longer equivalent. This leads to a change in the angle between the inner boundaries in the row space.
Finally, when data reduction is applied jointly in both modes, both row-space and column-space variance--covariance matrices differ from those of the original dataset. Consequently, the angles between the inner boundaries in both spaces change compared to the original data.

\subsubsection{Changes in the outer boundaries}
When the data matrix is reduced to a size of $2 \times 200$, the basis vectors of the row space remain exactly the same as those of the original dataset. Therefore, the outer boundaries defined in this space are preserved, and no change is observed between the original and reduced datasets. This is consistent with Table~\ref{tab:2.1}1, where the angle between the outer boundaries in the original data and the $2 \times 200$ reduced data remains identical.
In contrast, in the column space, the situation is different. In the original dataset, the basis vectors of the column space consist of two vectors of dimension $1 \times 200$, with an angle of approximately 38.6016° between them. However, in the reduced dataset ($2 \times 200$), the corresponding basis vectors become two vectors of dimension $1 \times 2$, and the angle between them changes to approximately 87.0748°. Since the basis vectors of the column space are different in the original and reduced datasets, both in terms of dimensionality and angular relationship, the outer boundaries defined by these bases are also altered. As a result, the angle between the outer boundaries in the column space changes when comparing the reduced dataset with the original one.
When the data matrix is reduced to a size of $100 \times 2$, the opposite behavior is observed. In this case, the basis vectors of the column space remain identical in both the original and reduced datasets. Consequently, the outer boundaries defined in this space are also preserved, and no change is observed in the angle between the outer boundaries when comparing the original and reduced data.  In contrast, the situation in the row space is different. In the original dataset, the basis vectors of the row space consist of two vectors of dimension $1 \times 200$ with an angle of approximately 41.9793°. However, in the reduced dataset ($100 \times 2$), the corresponding row-space basis vectors become two vectors of dimension $1 \times 2$, while the angular relationship between them remains approximately 69.4912°. Since the basis vectors of the row space are different in the original and reduced datasets, both in terms of dimensionality and angular relationship, the outer boundaries defined by these bases are also altered. As a result, the angle between the outer boundaries in the row space changes when comparing the reduced dataset with the original one.
Finally, when data reduction is applied jointly in both modes, the basis vectors and their angular relationships change in both the row and column spaces. Consequently, the outer boundaries defined in both spaces are modified, leading to changes in the angles between the outer boundaries in both representations of the reduced data compared to the original dataset.

\subsubsection{Changes in the AFSs}
AFSs are constrained by the inner and outer boundaries. Therefore, in cases where the inner and outer boundaries are altered in the reduced dataset compared to the original data, the rotational ambiguity is also affected. In other words, the angle between the inner and outer boundaries for each component changes as a result of data reduction. In two-component data, the reduced data matrix is always of size 2×2, meaning that the profiles of the reduced data consist of only two elements. Since profiles corresponding to the outer boundary represent the non-negativity constraint, and any profile on this boundary must contain at least one zero element, it follows that in the reduced two-component case, one of the two elements is always zero. Consequently, the angle between the two outer components in the reduced space is always 90 degrees. Importantly, this property holds for all two-component data sets. 
The off-diagonal elements of the variance--covariance matrix correspond to the inner products between pairs of rows or columns of the data matrix. Since the inner product between two vectors can be expressed as the product of their norms and the cosine of the angle between them, any change in these off-diagonal elements directly reflects a change in the angles between the corresponding vectors. Consequently, when data reduction leads to changes in the variance--covariance structure associated with essential points, it implies that the geometric relationships between essential rows and columns are altered. This ultimately results in a change in the data pattern, which in turn leads to variations in rotational ambiguity.

\subsection{Comparison of the inner polygon of the original and reduced datasets after external normalization}

A three-component simulated dataset was generated to investigate whether the 
inner polygon preserves its geometric structure following data reduction using external normalization~\cite{RAJKO2024}. The simulated spectral profiles are shown in Fig~\ref{fig7:extnormdata}. 

\begin{figure}
  \centering
  \includegraphics[width=0.45\textwidth]{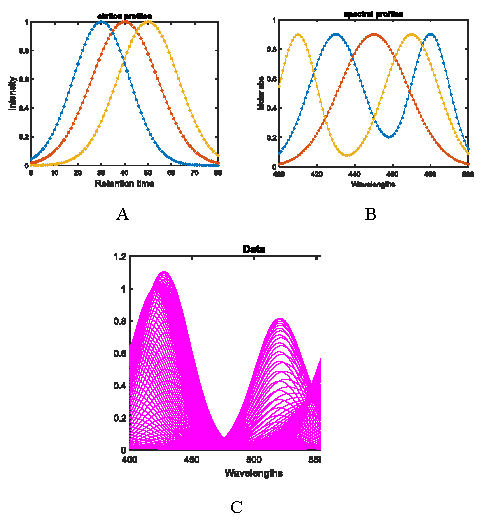} \\
  \caption{A) Elution profiles, B) Pure spectral profiles and C) simulated noise free data}
  \label{fig7:extnormdata}
\end{figure}

Step 1: Selection of Essential Rows and Columns. 
To identify the essential rows, each row of the data matrix was first normalized 
by dividing it by the sum of its elements. SVD 
was then applied to the normalized data to obtain the row coordinates in the SVD 
space. The resulting coordinates were subsequently mean-centered, after which a 
second SVD was performed to obtain the final row coordinates. Finally, the 
vertices of the convex hull in the row space were identified using MATLAB's 
convhull function, and these vertices were selected as the essential rows. 
To identify the essential columns, each column of the data matrix was first 
normalized by dividing it by the sum of its elements. SVD was then applied to the normalized data to obtain the 
column coordinates in the SVD space. The resulting coordinates were 
subsequently mean-centered, and a second SVD was performed to obtain the final 
column coordinates. The vertices of the convex hull in the column space were 
then identified using MATLAB's convhull function, and these vertices were 
selected as the essential columns. 
Using this procedure, a total of 64 essential rows and 36 essential columns were 
identified. Finally, a reduced data matrix was constructed by retaining only the 
identified essential rows and essential columns from the original data matrix. 
Comparison of the Inner Polygons in the Row Space. 
The original and reduced data matrices were compared by evaluating their inner 
polygons in the row score space. For both datasets, each row was first normalized 
by dividing it by the sum of its elements. SVD 
was then applied to the normalized data to obtain the row scores. The resulting 
row scores were subsequently mean-centered, and a second SVD was performed 
to obtain the final row scores. The corresponding inner polygons were then 
constructed in the row score space. 
To enable a meaningful comparison, the inner polygon of the reduced data matrix 
was aligned with that of the original data matrix using Procrustes transformation. 
Specifically, the inner polygon of the original data matrix was considered as the 
reference, and the transformed inner polygon of the reduced data matrix was 
superimposed onto it. As shown in Figure 8A, the two inner polygons do not 
completely coincide, indicating that the reduced data matrix does not fully 
preserve the geometric structure of the original data matrix in the row score space. 
Comparison of the Inner Polygons in the Column Space. 
A similar analysis was performed in the column score space. For both the original 
and the reduced data matrices, each column was first normalized by dividing it 
by the sum of its elements. SVD was then applied separately to the normalized data to obtain the column scores. The resulting column scores were subsequently mean-centered, and a second SVD was performed to obtain the final column scores. The corresponding inner polygons were then constructed in the column score space. 
Before comparison, the inner polygon of the reduced data matrix was aligned 
with that of the original data matrix using Procrustes transformation, with the inner polygon of the original data matrix serving as the reference. As shown in Fig~\ref{fig8:externaln}B, the two inner polygons do not completely coincide, indicating that the reduced data matrix does not fully preserve the geometric structure of the original data matrix in the column score space. 

\begin{figure}
  \centering
  \includegraphics[width=0.45\textwidth]{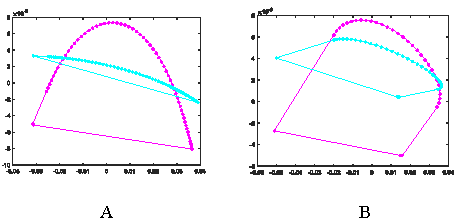} \\
  \caption{A) row space and B) column space. In both figures, the pink inner 
polygon represents the original data matrix, while the cyan inner polygon 
represents the reduced data matrix after Procrustes transformation.}
  \label{fig8:externaln}
\end{figure}

\subsection{Real data sets}
This dataset comprises real HPLC-DAD measurements of organophosphorus pesticides in natural water samples, originally collected as part of an interlaboratory comparison exercise. Dataset consists of a three-component chemical system containing two identified pesticides, diazinon and ethyl parathion, along with one unknown interferent (Fig~\ref{fig:real_data}). Further details regarding the experimental procedure, data acquisition, and dataset are provided in~\cite{AZZOUZ2008}. 
The dataset was reduced in three different ways to investigate the effect of dimensional reduction on the Borgen-Rajk\'{o} plot representation. In the first case, the essential rows were retained while the non-essential rows were removed, resulting in the reduced data matrix $\mathbf{D}_{dr}$. In the second case, the essential columns were retained, and the non-essential columns were removed, yielding $\mathbf{D}_{dc}$. Finally, in the third case, both non-essential rows and columns were removed, resulting in the jointly reduced data matrix $\mathbf{D}_{drc}$.
The Borgen-Rajk\'{o} plots were then calculated for the original dataset and for each of the three reduced datasets. For each case, the Borgen-Rajk\'{o} plots were evaluated in both the row and column spaces to assess the changes in the corresponding geometric representations resulting from dimensional reduction. The resulting plots are presented in Fig~\ref{fig:realBRplt}.

\begin{figure}
  \centering
  \includegraphics[width=0.3\textwidth]{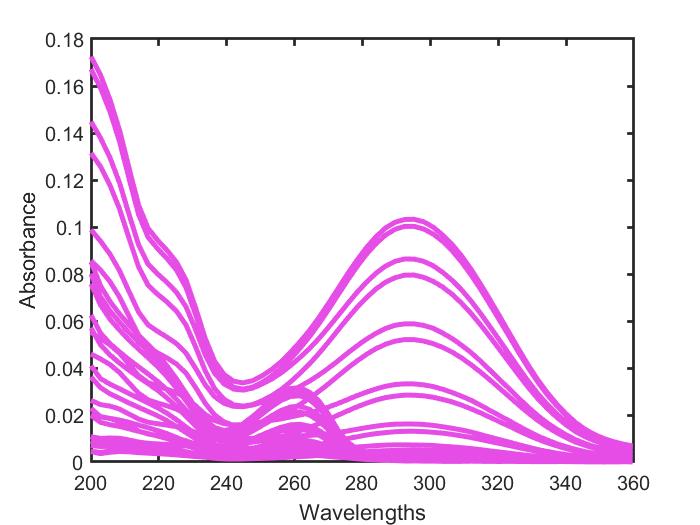} \\
  \caption{Real HPLC-DAD measurements of organophosphorus pesticides.}
  \label{fig:real_data}
\end{figure}

\begin{figure}
  \centering
  \includegraphics[width=0.45\textwidth]{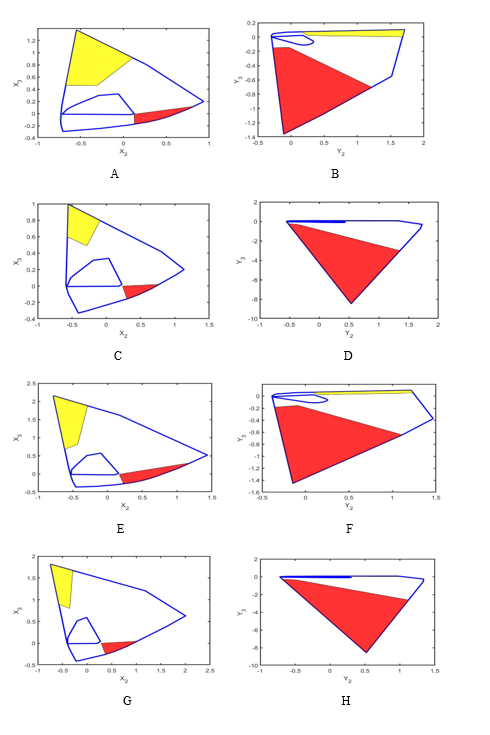} \\
  \caption{Borgen-Rajk\'{o} plots obtained for the original and reduced datasets in the row and column spaces: (A) original dataset in the row space; (B) original dataset in the column space; (C) $\mathbf{D}_{dr}$ in the row space; (D) $\mathbf{D}_{dr}$ in the column space; (E) $\mathbf{D}_{dc}$ in the row space; (F) $\mathbf{D}_{dc}$ in the column space; (G) $\mathbf{D}_{drc}$ in the row space; and (H) $\mathbf{D}_{drc}$ in the column space.}
  \label{fig:realBRplt}
\end{figure}

The results obtained from the real dataset are consistent with those observed for the simulated datasets. Following dimensional reduction, the pattern of the data points changes, which consequently alters the geometry of the inner polygon. Thus, the inner polygon obtained from a reduced dataset may undergo a transformation relative to that of the original dataset. Depending on the type of dimensional reduction, this transformation may correspond to either an orthogonal transformation or a non-orthogonal isomorphism. Therefore, the Borgen-Rajk\'{o} plots of the original and reduced datasets cannot be directly compared based solely on their apparent geometric orientation. A correction step is consequently required to transform the reduced-space representation into the corresponding reference orientation before a meaningful comparison with the original Borgen-Rajk\'{o} plot can be made. 

\section{Theoretical considerations}
\subsection{The four fundamental subspaces}
A matrix can be thought of as a machine that transforms input vectors into output vectors. The "four fundamental subspaces" describe what that machine does and does not do — where its outputs can land, and which inputs get erased.
Thus, any linear system — whether a physical sensor, a statistical model, or a network of equations — has four intrinsic "geometric compartments" that describe what information flows through it, what gets lost, and what can or cannot be produced. These are not artifacts of notation; they are structural properties of the system itself.

\subsubsection{Column space — the reachable outputs} 
Every linear system has a fixed repertoire of outcomes it can produce, no matter what input it receives. This is the column space: the full range of physically achievable outputs. A measurement device, for instance, can only report signals that lie within its column space — any true phenomenon outside that range is invisible to it, not because the phenomenon doesn't exist, but because the instrument has no channel through which to express it.

\subsubsection{Row space — the information that survives measurement}
Of everything that could vary in the input, only some directions actually influence the output; the rest are indistinguishable from the system's point of view. The row space is the portion of the input's structure that the system is actually sensitive to. In an experimental setting, this is the subset of the input variability that leaves a detectable trace in the data — the effective "resolution" of the system.

\subsubsection{Null space — the information destroyed by the process}
The null space consists of every input pattern that produces zero effect — patterns the system is fundamentally blind to. This is the scientific notion of an unobservable degree of freedom: a variation in the underlying cause that, however large, leaves absolutely no signature in the outcome. It defines the intrinsic limit of what can ever be inferred by inverting the process.

\subsubsection{Left null space — the inaccessible region of outcome-space}
Complementary to the column space, the left null space identifies combinations of output measurements that no input, however chosen, could ever generate. Practically, this is where noise, error, or model misspecification is forced to live: any discrepancy between an observed outcome and the system's true capability necessarily falls into this compartment, since the system itself has no mechanism to produce that discrepancy from a genuine input.

\subsubsection*{The unifying picture}
Together, these four regions partition reality into what a system can convey and what it inevitably discards, on both the input and output side:
\begin{itemize}
\item On the input side: some structure gets through (row space); the rest is annihilated (null space).
\item On the output side: some region is achievable (column space); the rest is permanently out of reach (left null space).
\end{itemize}
This decomposition is why the same mathematical structure appears across disciplines — in signal processing it explains bandwidth limits and irrecoverable noise; in statistics it explains identifiability and residual error; in physics it explains conserved versus dissipated quantities. In every case, the "four subspaces" describe the same universal fact: any linear process has a built-in boundary between what it can transmit and what it necessarily loses, on both ends of the transformation. More technical details can be found in Section~\ref{App:Sec-4-fundSubSp} in Appendix.

\subsection{The four fundamental subspaces under convex-hull-vertex deletion}
This and the next subsections were compiled based on the relevant information published in the literature~\cite{ROCKAFELLAR1970,Cheng2005,Mahoney2009,Ding2018,Cortinovis2020}.

Given $\mathbf{D}\in\mathbb{R}^{m\times n}$, rank $r$, with compact SVD $\mathbf{D}=\mathbf{U}\, \mathbf{S}\,\mathbf{V}^\mathrm{T}$, $\mathbf{X}=\mathbf{U}\,\mathbf{S}$, $\mathbf{Y}=\mathbf{V}\,\mathbf{S}$. Let $\mathcal{I} \in \{i_1,i_2,\ldots,i_p\} \subseteq \{1,\dots,m\}$ index kept (hull-vertex) rows and $\mathcal{J} \in \{j_1,j_2,\ldots,j_q\} \subseteq \{1,\dots,n\}$ index kept (hull-vertex) columns, chosen so rank is preserved throughout.

\begin{strip}
\begin{samepage}
\textbf{Column space}
 $\mathfrak{Col}(\mathbf{D})\subset\mathbb{R}^m$
\begin{center}
\begin{tabular}{@{}>{\raggedright}p{3.2cm} p{10cm}@{}}
\toprule
\textbf{Matrix} & \textbf{What happens} \\
\midrule
$\mathbf{D}$ & $\mathfrak{Col}(\mathbf{D}) = \operatorname{span}(\mathbf{U})$, dimension $r$ \\[4pt]
$\mathbf{D}_{dr}$ (rows deleted) & Ambient space shrinks $\mathbb{R}^m \to \mathbb{R}^{p}$; $\mathbf{U}(\mathcal{I},:)$ relates to $\mathbf{U}_{dr}$ by a general invertible (non-orthogonal) map $N_{dr}$ --- \emph{isomorphic}, not identical \\[4pt]
$\mathbf{D}_{dc}$ (columns deleted) & $m$ unchanged; $\mathfrak{Col}(\mathbf{D}_{dc}) = \mathfrak{Col}(\mathbf{D})$ \emph{exactly} --- kept columns already span it. $\mathbf{U}_{dc} = \mathbf{U}\, \mathbf{Q}_{dc}$, $\mathbf{Q}_{dc}$ orthogonal \\[4pt]
$\mathbf{D}_{drc}$ & Combination of both effects above \\
\bottomrule
\end{tabular}
\end{center}
\end{samepage}
\end{strip}

\begin{strip}
\begin{samepage}
\textbf{Row space} $\mathfrak{Col}(\mathbf{D}^\mathsf{T}) \subset \mathbb{R}^n$
\begin{center}
\begin{tabular}{@{}>{\raggedright}p{3.2cm} p{10cm}@{}}
\toprule
\textbf{Matrix} & \textbf{What happens} \\
\midrule
$\mathbf{D}$ & $\mathfrak{Col}(\mathbf{D}^\mathsf{T}) = \operatorname{span}(\mathbf{V})$, dimension $r$ \\[4pt]
$\mathbf{D}_{dr}$ & $n$ unchanged; $\mathfrak{Col}(\mathbf{D}_{dr}^\mathsf{T}) = \mathfrak{Col}(\mathbf{D}^\mathsf{T})$ \emph{exactly} --- deleted rows are combinations of kept ones. $\mathbf{V}_{dr} = \mathbf{V}\, \mathbf{Q}_{dr}$, $\mathbf{Q}_{dr}$ orthogonal \\[4pt]
$\mathbf{D}_{dc}$ & Ambient shrinks $\mathbb{R}^n \to \mathbb{R}^{q}$; $\mathbf{V}(\mathcal{J},:)$ relates to $\mathbf{V}_{dc}$ by a general invertible map $M$ --- \emph{isomorphic}, not identical \\[4pt]
$\mathbf{D}_{drc}$ & $\mathbf{V}_{drc} = \mathbf{V}_{dc}\, \mathbf{Q}_{drc}$, a further rotation on top of the isomorphism already applied for $\mathbf{V}_{dc}$ \\
\bottomrule
\end{tabular}
\end{center}
\end{samepage}
\end{strip}

\textbf{Null space} $\mathfrak{Null}(\mathbf{D}) \subset \mathbb{R}^n$
\begin{itemize}
  \item Dimension $n - r$ for $\mathbf{D}$; shrinks to $q - r$ for $\mathbf{D}_{dc}$/$\mathbf{D}_{drc}$ since fewer columns remain.
  \item Row deletion alone ($\mathbf{D}_{dr}$) leaves the ambient space $\mathbb{R}^n$ untouched, so $\mathfrak{Null}(\mathbf{D}_{dr})$ still has dimension $n - r$, and in fact $\mathfrak{Null}(\mathbf{D}_{dr})$ can differ as a \emph{set} even though the dimension matches --- because $\mathfrak{Null}(\mathbf{D}_{dr})$ only requires the $\mathbf{X}(\mathcal{I},:)$-side relations to vanish, a weaker condition than requiring all of $\mathbf{D}$'s rows to vanish. In practice
  $
  \mathfrak{Null}(\mathbf{D}_{dr}) \supseteq \mathfrak{Null}(\mathbf{D}).
  $
  \item If the kept columns $\mathcal{J}$ form an exact basis ($|\mathcal{J}| = r$), the null space collapses to $\{0\}$.
\end{itemize}

\textbf{Left null space} $\mathfrak{Null}(\mathbf{D}^\mathsf{T}) \subset \mathbb{R}^m$
\begin{itemize}
  \item Dimension $m - r$ for $\mathbf{D}$; shrinks to $p - r$ for $\mathbf{D}_{dr}$/$\mathbf{D}_{drc}$.
  \item Column deletion alone ($\mathbf{D}_{dc}$) leaves $\mathbb{R}^m$ untouched, so $\mathfrak{Null}(\mathbf{D}_{dc}^\mathsf{T})$ has dimension $m - r$ and
  $
  \mathfrak{Null}(\mathbf{D}_{dc}^\mathsf{T}) \supseteq \mathfrak{Null}(\mathbf{D}^\mathsf{T}).
  $
  \item If the kept rows $\mathcal{I}$ form an exact basis ($|\mathcal{I}| = r$), the left null space collapses to $\{0\}$.
\end{itemize}

The subspace living in the ambient space that is \textbf{not} being shrunk is preserved \emph{exactly} (only rotates, via an orthogonal $\mathbf{Q}$); the subspace living in the ambient space that \textbf{is} being shrunk survives only as an \emph{isomorphic image} under the coordinate-restriction map (via a general invertible, non-orthogonal transform). This is the structural fact verified numerically for $\mathbf{U}, \mathbf{V}, \mathbf{X}, \mathbf{Y}$ across $\mathbf{D}, \mathbf{D}_{dr}, \mathbf{D}_{dc}, \mathbf{D}_{drc}$.

\subsection{Subspace equivalence under convex-hull-vertex deletion}\label{subsec:subspEqunderCHdel}

\subsubsection*{Setup}

Let $\mathbf{D} \in \mathbb{R}^{m\times n}$ have rank $r$, with compact SVD: 
$
\mathbf{D} = \mathbf{U}\, \mathbf{S}\, \mathbf{V}^\mathsf{T}, \quad \mathbf{U}^\mathsf{T}\,\mathbf{U} = \mathbf{I}_r,\ \mathbf{V}^\mathsf{T}\,\mathbf{V} = \mathbf{I}_r,\ \mathbf{S} = \operatorname{diag}(\sigma_1,\dots,\sigma_r),
$
and write $\mathbf{X} := \mathbf{U}\,\mathbf{S}$, so that $\mathbf{D} = \mathbf{X}\,\mathbf{V}^\mathsf{T}$.

Let $\mathcal{I} \in \{i_1,i_2,\ldots,i_p\} \subseteq \{1,\dots,m\}$ index the row-hull vertices and $\mathcal{J} \in \{j_1,j_2,\ldots,j_q\} \subseteq \{1,\dots,n\}$  the column-hull vertices, chosen so that deletion is rank-preserving:
\[
\operatorname{rank}\bigl(\mathbf{D}(\mathcal{I},:)\bigr) = \operatorname{rank}\bigl(\mathbf{D}(:,\mathcal{J})\bigr) = \operatorname{rank}\bigl(\mathbf{D}(\mathcal{I},\mathcal{J})\bigr) = r.
\]
Define the reduced factorizations $\mathbf{D}_{dr}=\mathbf{X}_{dr}\,\mathbf{V}_{dr}^\mathsf{T}$, $\mathbf{D}_{dc}=\mathbf{X}_{dc}\,\mathbf{V}_{dc}^\mathsf{T}$, $\mathbf{D}_{drc}=\mathbf{X}_{drc}\,\mathbf{V}_{drc}^\mathsf{T}$, each obtained from the compact SVD of the corresponding submatrix.

\subsubsection*{A necessary caution}

If we additionally insist that \emph{every} $\mathbf{X}$ has exactly orthonormal columns (with no residual diagonal factor), then $\mathbf{D}, \mathbf{D}_{dr}, \mathbf{D}_{dc}, \mathbf{D}_{drc}$ are all forced to be \textbf{partial isometries} (rank-$r$ matrices whose nonzero singular values equal $1$), since
\[
\mathbf{D}^\mathsf{T}\,\mathbf{D} = \mathbf{V}\, \mathbf{X}^\mathsf{T}\mathbf{X}\, \mathbf{V}^\mathsf{T} = \mathbf{V}\,\mathbf{V}^\mathsf{T}
\]
would then be a projection with eigenvalues $0,1$. But deleting rows or columns generically \emph{shrinks} the singular values via the rank-one downdate $\tilde{\boldsymbol{\Sigma}}^2 = \boldsymbol{\Sigma}^2 - \mathbf{z}\,\mathbf{z}^\mathsf{T}$. Hence requiring $\mathbf{X}_{dr}, \mathbf{X}_{dc}$, etc.\ to remain exactly orthonormal is consistent only in a non-generic special case; we identify this case precisely below, and give the fully general (and always valid) statement afterward.

\subsubsection*{Step 1: Facts that hold unconditionally}

\begin{lemma}[Row deletion preserves the row space exactly]
	If every deleted row is a linear combination of the kept rows $\mathcal{I}$ (true in particular when it is a convex combination), then
	\[
	\mathfrak{Row}\bigl(\mathbf{D}(\mathcal{I},:)\bigr) = \mathfrak{Row}(\mathbf{D}) =: S_{\mathrm{row}}.
	\]
\end{lemma}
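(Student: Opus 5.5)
The plan is to prove the equality by double inclusion. Throughout, the row space is viewed as the span in $\mathbb{R}^n$ of the row vectors $\mathbf{d}_i^\mathsf{T} := \mathbf{D}(i,:)$.

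\emph{Inclusion $\subseteq$.} This direction is immediate. Every row of $\mathbf{D}(\mathcal{I},:)$ is literally a row of $\mathbf{D}$. Hence the span of the kept rows is contained in the span of all rows, that is, $\mathfrak{Row}(\mathbf{D}(\mathcal{I},:)) \subseteq \mathfrak{Row}(\mathbf{D})$. No hypothesis is needed here.

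\emph{Inclusion $\supseteq$.} This direction uses the hypothesis. Take an arbitrary $\mathbf{w} \in \mathfrak{Row}(\mathbf{D})$ and write it as $\mathbf{w} = \sum_{i=1}^m c_i \mathbf{d}_i$. Split the sum into kept indices $i\in\mathcal{I}$ and deleted indices $k\notin\mathcal{I}$. By assumption, each deleted row can be written as $\mathbf{d}_k = \sum_{i\in\mathcal{I}} a_{ki}\mathbf{d}_i$. Substituting gives
\[
\mathbf{w} = \sum_{i\in\mathcal{I}} \Bigl(c_i + \sum_{k\notin\mathcal{I}} c_k a_{ki}\Bigr)\mathbf{d}_i .
\]
This exhibits $\mathbf{w}$ as a linear combination of the kept rows, so $\mathbf{w}\in\mathfrak{Row}(\mathbf{D}(\mathcal{I},:))$.

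An equivalent matrix formulation is $\mathbf{D} = \mathbf{P}\,\mathbf{D}(\mathcal{I},:)$ for some $\mathbf{P}\in\mathbb{R}^{m\times p}$. Its rows are the unit vectors for $i\in\mathcal{I}$ and the coefficient vectors $(a_{ki})$ for $k\notin\mathcal{I}$. Then $\mathbf{D}^\mathsf{T} = \mathbf{D}(\mathcal{I},:)^\mathsf{T}\mathbf{P}^\mathsf{T}$, so $\mathfrak{Col}(\mathbf{D}^\mathsf{T})\subseteq\mathfrak{Col}(\mathbf{D}(\mathcal{I},:)^\mathsf{T})$.

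\emph{The parenthetical claim about convex combinations.} This is the only point needing care. In the EDP setting the convex combination is taken in the normalized coordinates used to build the hull, not in the raw rows. If a deleted normalized row satisfies $\tilde{\mathbf{d}}_k = \sum_{i} \lambda_{ki}\tilde{\mathbf{d}}_i$ with $\tilde{\mathbf{d}}_j = \mathbf{d}_j/s_j$ and scalars $s_j\neq 0$ (row sums or norms), then multiplying through gives $\mathbf{d}_k = \sum_i (s_k\lambda_{ki}/s_i)\mathbf{d}_i$. This is still a linear combination, so the hypothesis of the lemma holds. If the hull is computed in SVD coordinates $\mathbf{X}$, the same argument applies after noting $\mathbf{D} = \mathbf{X}\,\mathbf{V}^\mathsf{T}$: a linear relation among rows of $\mathbf{X}$ transfers to the same relation among rows of $\mathbf{D}$. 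This transfer to the raw rows is the main (mild) obstacle.

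\emph{Consistency with the rank assumption.} As a final check, the equality gives $\operatorname{rank}(\mathbf{D}(\mathcal{I},:)) = r$. This matches the rank-preservation assumption in the Setup.
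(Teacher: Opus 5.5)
Your proof is correct and follows the same double-inclusion argument as the paper, which notes that each deleted row $d_i = \sum_{j\in\mathcal{I}} c_j d_j$ already lies in $\mathfrak{Row}(\mathbf{D}(\mathcal{I},:))$ and calls the reverse inclusion trivial. Your extra remark, which turns a convex combination in normalized or SVD coordinates back into a linear relation among the raw rows of $\mathbf{D}$, goes beyond the paper's proof but is correct and justifies the parenthetical claim in the EDP setting.
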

\begin{proof}
	For $i \notin \mathcal{I}$, $d_i = \sum_{j\in I} c_j d_j \in \mathfrak{Row}(\mathbf{D}(\mathcal{I},:))$, so every row of $\mathbf{D}$ lies in $\mathfrak{Row}(\mathbf{D}(\mathcal{I},:))$, giving $\mathfrak{Row}(\mathbf{D}) \subseteq \mathfrak{Row}(\mathbf{D}(\mathcal{I},:))$. The reverse inclusion is trivial.
\end{proof}

Since $\mathbf{V}$ and $\mathbf{V}_{dr}$ are both orthonormal bases of the same subspace $S_{\mathrm{row}}$, any two orthonormal bases of the same subspace differ by an orthogonal matrix:
\[
\boxed{\mathbf{V}_{dr} = \mathbf{V}\, \mathbf{Q}_{dr}}, \qquad \mathbf{Q}_{dr} \in \mathfrak{Orth}(r).
\]

\begin{lemma}[Column deletion preserves the column space exactly]
	Dually, if $\mathcal{J}$ indexes kept columns spanning $\mathfrak{Col}(\mathbf{D})$, then $\mathfrak{Col}(\mathbf{D}(:,\mathcal{J})) = \mathfrak{Col}(\mathbf{D})$, and $\mathbf{U}, \mathbf{U}_{dc}$ (both orthonormal bases of $\mathfrak{Col}(\mathbf{D})$) satisfy
	\[
	\mathbf{U}_{dc} = \mathbf{U}\, \mathbf{Q}_{dc}, \qquad \mathbf{Q}_{dc} \in \mathfrak{Orth}(r).
	\]
\end{lemma}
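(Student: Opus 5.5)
The argument mirrors the proof of the preceding lemma (row deletion), applied to the transpose. I would first show the subspace equality $\mathfrak{Col}(\mathbf{D}(:,\mathcal{J})) = \mathfrak{Col}(\mathbf{D})$, and then derive the orthogonal change of basis between $\mathbf{U}$ and $\mathbf{U}_{dc}$ from a general fact about orthonormal bases of a common subspace.

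\textbf{Step 1: column spaces coincide.} The inclusion $\mathfrak{Col}(\mathbf{D}(:,\mathcal{J})) \subseteq \mathfrak{Col}(\mathbf{D})$ is immediate, since every kept column is a column of $\mathbf{D}$. For the reverse inclusion I would use the hypothesis directly: the kept columns span $\mathfrak{Col}(\mathbf{D})$. In the convex-hull setting this holds because each deleted column $d^{(k)}$, $k\notin\mathcal{J}$, is a convex (hence linear) combination $\sum_{j\in\mathcal{J}} c_j d^{(j)}$ of hull-vertex columns. Alternatively, it follows from the rank-preservation assumption $\operatorname{rank}(\mathbf{D}(:,\mathcal{J}))=r$. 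A subspace of dimension $r$ contained in the $r$-dimensional $\mathfrak{Col}(\mathbf{D})$ must equal it. This dimension argument is the cleaner route, since it needs no appeal to normalisation.

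\textbf{Step 2: the orthogonal link.} Let $\mathbf{D}_{dc}=\mathbf{U}_{dc}\mathbf{S}_{dc}\mathbf{V}_{dc}^\mathsf{T}$ be the compact SVD. Then $\mathbf{U}_{dc}\in\mathbb{R}^{m\times r}$ has orthonormal columns spanning $\mathfrak{Col}(\mathbf{D}_{dc})=\mathfrak{Col}(\mathbf{D})=\operatorname{span}(\mathbf{U})$.

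Define $\mathbf{Q}_{dc}:=\mathbf{U}^\mathsf{T}\mathbf{U}_{dc}$. Because $\mathbf{U}\mathbf{U}^\mathsf{T}$ is the orthogonal projector onto $\operatorname{span}(\mathbf{U})$, and the columns of $\mathbf{U}_{dc}$ lie in that span,
\[
\mathbf{U}\,\mathbf{Q}_{dc}=\mathbf{U}\mathbf{U}^\mathsf{T}\mathbf{U}_{dc}=\mathbf{U}_{dc}.
\]
It then follows that
\[
\mathbf{Q}_{dc}^\mathsf{T}\mathbf{Q}_{dc}=\mathbf{U}_{dc}^\mathsf{T}\mathbf{U}\mathbf{U}^\mathsf{T}\mathbf{U}_{dc}=\mathbf{U}_{dc}^\mathsf{T}\mathbf{U}_{dc}=\mathbf{I}_r.
\]
Since $\mathbf{Q}_{dc}$ is square, this gives $\mathbf{Q}_{dc}\in\mathfrak{Orth}(r)$.

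\textbf{Where care is needed.} No step is genuinely hard; the only delicate point is Step 1. It requires that the kept columns actually span the full column space, not merely lie in it. This is exactly why the rank-preservation hypothesis is built into the setup.

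A secondary remark: $\mathbf{Q}_{dc}$ is in general not a signed permutation. Singular values change under column deletion, so the individual singular vectors mix within the preserved subspace; only the subspace itself is invariant.
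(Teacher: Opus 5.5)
Your proof is correct and follows the paper's route. The paper proves the subspace equality by the dual of the row-deletion linear-combination inclusion argument, and then invokes the fact that two orthonormal bases of one subspace differ by an orthogonal matrix. Your Step 2 proves that fact explicitly with $\mathbf{Q}_{dc}=\mathbf{U}^\mathsf{T}\mathbf{U}_{dc}$, which is exactly the matrix the paper uses in its numerical check, and your rank-based alternative for Step 1 is equally valid under the paper's rank-preservation assumption.
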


\subsubsection*{Step 2: What the tension reveals}

Matching $\mathbf{D}(\mathcal{I},:) = \mathbf{X}(\mathcal{I},:)V^\mathsf{T}$ with $\mathbf{D}_{dr} = \mathbf{X}_{dr}\,\mathbf{V}_{dr}^\mathsf{T} = \mathbf{X}_{dr}\,\mathbf{Q}_{dr}^\mathsf{T}\mathbf{V}^\mathsf{T}$ and cancelling $\mathbf{V}^\mathsf{T}$ (right-multiply by $\mathbf{V}$, using $\mathbf{V}^\mathsf{T}\,\mathbf{V} = \mathbf{I}_r$):
\[
\boxed{\mathbf{X}_{dr} = \mathbf{X}(\mathcal{I},:)\,\mathbf{Q}_{dr}}.
\]
This relation is \emph{forced}, not chosen. Now
\[
\mathbf{X}_{dr}^\mathsf{T}\,\mathbf{X}_{dr} = \mathbf{Q}_{dr}^\mathsf{T}\,\mathbf{X}(\mathcal{I},:)^\mathsf{T}\,\mathbf{X}(\mathcal{I},:)\,\mathbf{Q}_{dr},
\]
which equals $\mathbf{I}_r$ if and only if $\mathbf{X}(\mathcal{I},:)^\mathsf{T}\,\mathbf{X}(\mathcal{I},:) = \mathbf{I}_r$, i.e.\ if and only if
\[
\mathbf{X}(\mathcal{I}^c,:)^\mathsf{T}\,\mathbf{X}(\mathcal{I}^c,:) = \mathbf{X}^\mathsf{T}\,\mathbf{X} - \mathbf{X}(\mathcal{I},:)^\mathsf{T}\,\mathbf{X}(\mathcal{I},:) = 0,
\]
i.e.\ \textbf{the deleted rows of $\mathbf{X}$ are exactly zero.} This means the deleted data points $\mathbf{D}(i,:) = \mathbf{X}(i,:)\,\mathbf{V}^\mathsf{T}$ were literally the zero vector — a very special case, not the generic ``interior point of the convex hull.''

By the symmetric argument applied to column deletion, $\mathbf{V}_{dc}$ is exactly orthonormal only if the deleted columns of $\mathbf{V}$, i.e.\ $\mathbf{V}(\mathcal{J}^c,:)$, vanish exactly.

\begin{claim}
	Free rotations exist on the side where the ambient space does not shrink (Step 1). On the side where the ambient space does shrink ($\mathbf{X}$ under row deletion, $\mathbf{V}$ under column deletion), exact orthonormality of the reduced factor is only consistent with an extra vanishing condition; otherwise a singular-value correction is unavoidable.
\end{claim}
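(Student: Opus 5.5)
The claim has two halves, and I will prove them in order, first for row deletion and then for column deletion by transposing the argument.

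\emph{Free rotations on the non-shrinking side.} By the first Lemma of Step 1, the row space satisfies $\mathfrak{Row}(\mathbf{D}(\mathcal{I},:)) = S_{\mathrm{row}}$. Hence $\mathbf{V}$ and $\mathbf{V}_{dr}$ are two orthonormal bases of the same $r$-dimensional subspace. Set $\mathbf{Q}_{dr} := \mathbf{V}^\mathsf{T}\mathbf{V}_{dr}$. Because $\mathbf{V}\mathbf{V}^\mathsf{T}$ is the orthogonal projector onto $S_{\mathrm{row}}$, we have $\mathbf{V}\mathbf{Q}_{dr} = \mathbf{V}_{dr}$, and then $\mathbf{Q}_{dr}^\mathsf{T}\mathbf{Q}_{dr} = \mathbf{V}_{dr}^\mathsf{T}\mathbf{V}_{dr} = \mathbf{I}_r$, so $\mathbf{Q}_{dr}$ is orthogonal. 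Conversely, for any $\mathbf{Q}\in\mathfrak{Orth}(r)$ the pair $(\mathbf{X}(\mathcal{I},:)\mathbf{Q},\ \mathbf{V}\mathbf{Q})$ factors $\mathbf{D}_{dr}$ and keeps an orthonormal right factor. So the rotation really is ``free'': it is fixed only up to this $\mathfrak{Orth}(r)$ freedom, for example by the SVD sign and ordering conventions. Applying the same argument to $\mathbf{D}^\mathsf{T}$, using the second Lemma, gives $\mathbf{U}_{dc} = \mathbf{U}\mathbf{Q}_{dc}$.

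\emph{Forced relation on the shrinking side.} As in Step 2, write $\mathbf{D}_{dr} = \mathbf{X}(\mathcal{I},:)\mathbf{V}^\mathsf{T} = \mathbf{X}_{dr}\mathbf{Q}_{dr}^\mathsf{T}\mathbf{V}^\mathsf{T}$ and right-multiply by $\mathbf{V}$. This gives $\mathbf{X}_{dr} = \mathbf{X}(\mathcal{I},:)\mathbf{Q}_{dr}$. The relation is forced, because $\mathbf{V}$ has full column rank and so the cancellation is legitimate.

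\emph{Gram-matrix test.} I then compare Gram matrices. From the forced relation, $\mathbf{X}_{dr}^\mathsf{T}\mathbf{X}_{dr} = \mathbf{Q}_{dr}^\mathsf{T}\mathbf{X}(\mathcal{I},:)^\mathsf{T}\mathbf{X}(\mathcal{I},:)\mathbf{Q}_{dr}$, which is orthogonally similar to $\mathbf{X}(\mathcal{I},:)^\mathsf{T}\mathbf{X}(\mathcal{I},:)$. So it equals $\mathbf{I}_r$ exactly when $\mathbf{X}(\mathcal{I},:)^\mathsf{T}\mathbf{X}(\mathcal{I},:) = \mathbf{I}_r$. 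In the orthonormal normalization of the caution paragraph, $\mathbf{X}^\mathsf{T}\mathbf{X} = \mathbf{I}_r$. Splitting this as
\[
\mathbf{X}^\mathsf{T}\mathbf{X} = \mathbf{X}(\mathcal{I},:)^\mathsf{T}\mathbf{X}(\mathcal{I},:) + \mathbf{X}(\mathcal{I}^c,:)^\mathsf{T}\mathbf{X}(\mathcal{I}^c,:)
\]
turns the condition into $\mathbf{X}(\mathcal{I}^c,:)^\mathsf{T}\mathbf{X}(\mathcal{I}^c,:) = 0$. A Gram matrix vanishes only if its factor does, which we see by taking the trace to get the squared Frobenius norm. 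So the condition becomes $\mathbf{X}(\mathcal{I}^c,:) = 0$, and this is the extra vanishing condition in the Claim.

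\emph{Singular-value correction in the generic case.} Otherwise, $\mathbf{G} := \mathbf{X}(\mathcal{I},:)^\mathsf{T}\mathbf{X}(\mathcal{I},:)$ is symmetric positive definite, by the rank-preservation assumption, and differs from $\mathbf{X}^\mathsf{T}\mathbf{X}$ by a nonzero positive semidefinite downdate. In that case I will exhibit the correction explicitly. Take the eigendecomposition $\mathbf{G} = \mathbf{W}\tilde{\boldsymbol{\Sigma}}^2\mathbf{W}^\mathsf{T}$. Then $\mathbf{X}(\mathcal{I},:)\mathbf{W}\tilde{\boldsymbol{\Sigma}}^{-1}$ has orthonormal columns, and
\[
\mathbf{D}_{dr} = \bigl(\mathbf{X}(\mathcal{I},:)\mathbf{W}\tilde{\boldsymbol{\Sigma}}^{-1}\bigr)\,\tilde{\boldsymbol{\Sigma}}\,(\mathbf{V}\mathbf{W})^\mathsf{T}
\]
is the SVD of $\mathbf{D}_{dr}$. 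The non-identity factor $\tilde{\boldsymbol{\Sigma}}$ is the unavoidable singular-value correction, and the map $\mathbf{W}\tilde{\boldsymbol{\Sigma}}^{-1}$ is the general non-orthogonal invertible $N_{dr}$ of the table. The column-deletion statements for $\mathbf{V}_{dc}$ follow by applying everything to $\mathbf{D}^\mathsf{T}$.

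The main obstacle I expect is normalization bookkeeping. The Claim's vanishing condition only makes sense under the convention of the caution paragraph, where $\mathbf{X}$ itself is required to be orthonormal. Under the paper's default $\mathbf{X} = \mathbf{U}\mathbf{S}$ the right statement is instead $\mathbf{G}\neq\mathbf{S}^2$ generically. So I would state explicitly which normalization is in force and check the iff in both conventions.
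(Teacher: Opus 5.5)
Your proposal is correct and follows essentially the same route as the paper: the Step~1 lemma gives $\mathbf{V}_{dr}=\mathbf{V}\mathbf{Q}_{dr}$, cancelling $\mathbf{V}^\mathsf{T}$ forces $\mathbf{X}_{dr}=\mathbf{X}(\mathcal{I},:)\mathbf{Q}_{dr}$, and splitting $\mathbf{X}^\mathsf{T}\mathbf{X}$ into kept and deleted Gram blocks yields the vanishing condition $\mathbf{X}(\mathcal{I}^c,:)=0$, with column deletion handled by symmetry. Your extras (the trace argument, the explicit SVD of $\mathbf{D}_{dr}$ built from $\mathbf{G}$, and the remark that the paper's ``iff'' step silently uses $\mathbf{X}^\mathsf{T}\mathbf{X}=\mathbf{I}_r$) only make explicit what the paper leaves implicit, and the one slip is harmless to the claim: in your orthonormal normalization, $\mathbf{W}\tilde{\boldsymbol{\Sigma}}^{-1}$ is $\mathbf{N}_{dr}^{-1}$ rather than $\mathbf{N}_{dr}$ under the paper's convention $\mathbf{U}(\mathcal{I},:)=\mathbf{U}_{dr}\mathbf{N}_{dr}$.
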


\subsubsection*{Step 3: The two honest regimes}

\subsubsection*{Regime 1 (idealized): deleted mass is orthogonal-complement-negligible}

If the discarded rows/columns satisfy $\mathbf{X}(\mathcal{I}^c,:) = 0$ and $\mathbf{V}(\mathcal{J}^c,:) = 0$ exactly, every reduced factor stays exactly orthonormal, and:

\begin{center}
	\begin{tabular}{@{}ll@{}}
		\toprule
		Pair & Exact relation \\
		\midrule
		$\mathbf{V} \leftrightarrow \mathbf{V}_{dr}$ & $\mathbf{V}_{dr} = \mathbf{V}\,\mathbf{Q}_{dr}$ \quad (rotation in $\mathbb{R}^n$) \\
		$\mathbf{X} \leftrightarrow \mathbf{X}_{dc}$ & $\mathbf{X}_{dc} = \mathbf{X}\, \mathbf{Q}_{dc}$ \quad (rotation in $\mathbb{R}^m$) \\
		$\mathbf{X} \leftrightarrow \mathbf{X}_{dr}$ & $\mathbf{X}_{dr} = \mathbf{X}(\mathcal{I},:)\,\mathbf{Q}_{dr}$ \quad (isometric embedding) \\
		$\mathbf{V} \leftrightarrow \mathbf{V}_{dc}$ & $\mathbf{V}_{dc} = \mathbf{V}(\mathcal{J},:)\,\mathbf{Q}_{dc}$ \quad (isometric embedding) \\
		\bottomrule
	\end{tabular}
\end{center}

with $\mathbf{V}_{drc} = \mathbf{V}_{dc}\, \mathbf{Q}_{drc}$ and $\mathbf{X}_{drc} = \mathbf{X}_{dc}(\mathcal{I},:)\, \mathbf{Q}_{drc}$ by the same commuting-square argument. In this regime, all four subspaces are literally isometric copies of one another, and the transformations reduce to pure rotations $\mathbf{Q}_{dr}, \mathbf{Q}_{dc}, \mathbf{Q}_{drc} \in \mathfrak{Orth}(r)$.

\subsubsection*{Regime 2 (generic): reinstate the singular values}

For a generic convex-hull deletion, the vanishing condition fails, and the correct compact SVD is
$
\mathbf{D} = \mathbf{X}\, \boldsymbol{\Sigma}\, \mathbf{V}^\mathsf{T}, \, \mathbf{X}^\mathsf{T}\, \mathbf{X} = \mathbf{V}^\mathsf{T}\, \mathbf{V} = \mathbf{I}_r,\, \Sigma = \operatorname{diag}(\sigma_1,\dots,\sigma_r).
$
The subspace relations still hold exactly:
$
\mathbf{V}_{dr} = \mathbf{V}\, \mathbf{Q}_{dr}, \, \mathbf{U}_{dc} = \mathbf{U}\, \mathbf{Q}_{dc},
$
while the ambient-shrinking maps are general invertible (not orthogonal) matrices:
$
\mathbf{X}(\mathcal{I},:) = \mathbf{X}_{dr}\, \mathbf{N}_{dr}, \, \mathbf{V}(\mathcal{J},:) = \mathbf{V}_{dc}\, \mathbf{M},
$
and the singular values re-diagonalize through a rank-one downdate,
$
\boldsymbol{\Sigma}_{dr}^2 = \boldsymbol{\Sigma}^2 - \mathbf{z}\,\mathbf{z}^\mathsf{T} \quad(\text{restricted appropriately}),
$
so deletion is a genuine rank-one downdate of $\boldsymbol{\Sigma}^2$, not a pure rotation, whenever the ambient dimension of that factor shrinks.

\subsubsection*{Interim conclusions related to this subsection}

\begin{itemize}
	\item \textbf{Subspace equivalence} holds fully rigorously in both regimes: $\mathbf{V}, \mathbf{V}_{dr}$ are literally identical subspaces of $\mathbb{R}^n$; $\mathbf{U}, \mathbf{U}_{dc}$ (equivalently the column space) are literally identical subspaces of $\mathbb{R}^m$; $\mathbf{V}_{dc}, \mathbf{V}_{drc}$ and their row-restricted counterparts are isomorphic images under the coordinate-projection maps established above.
	\item The claim that \textbf{all} $\mathbf{X}$'s and $\mathbf{V}$'s stay exactly orthonormal with no diagonal correction is true only when the deleted rows/columns contribute nothing to the respective Gram matrices — a non-generic special case. In general, exact orthonormality of both factors simultaneously is incompatible with the singular-value shrinkage that necessarily occurs under deletion, and the fully general statement must reinstate the diagonal matrices $\boldsymbol{\Sigma}_{dr}, \boldsymbol{\Sigma}_{dc}, \boldsymbol{\Sigma}_{drc}$.
\end{itemize}

\section{Discussions}


It seems, that experimental investigations in Section~\ref{sec:exprmntl} could prove that the deletion of rows and/or columns of the original data matrix using the essential data points can cause discrepancies. 

\subsection{Illustration in a naive way}
We illustrate the subspace-equivalence theory on a real source-apportionment dataset from Ref.~\cite{HENRY1990}. The source-composition matrix $\mathbf{C} \in \mathbb{R}^{10\times 3}$ gives the mass fraction of each of ten chemical elements in three pollution sources (Marine, Urban Dust, Auto). The source-apportionment matrix $\mathbf{A} \in \mathbb{R}^{20\times 3}$ gives the estimated contribution ($\mu\text{g}/\text{m}^3$) of each source to twenty air samples. 

The data matrix analyzed is $\mathbf{D} = \mathbf{C}\,\mathbf{A}^\mathsf{T} \in \mathbb{R}^{10 \times 20}$ (elements $\times$ samples), of rank $r=3$, with compact SVD $\mathbf{D} = \mathbf{U}_m\,\mathbf{S}_m\,\mathbf{V}_m^\mathsf{T}$, giving
$
\mathbf{X} = \mathbf{U}_m\,\mathbf{S}_m,\qquad \mathbf{V} = \mathbf{V}_m, \quad \mathbf{Y} = \mathbf{V}_m\,\mathbf{S}_m, \quad \mathbf{U} = \mathbf{U}_m,
$
(after a standard sign-flip normalization to fix the loading signs).

The EDPs are the vertices of the convex hull of the row-points (elements, in $\mathbf{X}$-space) and of the column-points (samples, in $\mathbf{Y}$-space). For this dataset, the hull-defining index sets are
$
\text{EssI} = \{1, 4, 7, 9, 10\} \subset \{1,\dots,10\}
$, 
$
\text{EssJ} = \{3, 5, 6, 9, 15, 17, 18\} \subset \{1,\dots,20\}
$.
Deleting all rows outside $\text{EssI}$ (resp.\ columns outside $\text{EssJ}$) yields the rank-preserving reductions $\mathbf{D}_{dr} = \mathbf{C}(\text{EssI},:)\,\mathbf{A}^\mathsf{T}$, $\mathbf{D}_{dc} = \mathbf{C}\,\mathbf{A}(\text{EssJ},:)^\mathsf{T}$, and using both,  $\mathbf{D}_{drc} = \mathbf{C}(\text{EssI},:)\,\mathbf{A}(\text{EssJ},:)^\mathsf{T}$.

For each reduction we compute the change-of-basis matrix $\mathbf{T}$ solving the relevant rotation equation (e.g.\ $\mathbf{T} = \mathbf{V}_{dr}\backslash \mathbf{V}$), transform the full loadings through $\mathbf{T}$, and check that the transformed rows corresponding to the essential indices reproduce the loadings computed directly from the reduced matrix, up to $10$ decimal places.

\subsubsection*{$\mathbf{V}$ and $\mathbf{V}_{dr}$: row-space rotation}
With $\mathbf{T} = \mathbf{V}_{dr} \backslash \mathbf{V}$ and $\mathbf{X}_{dr}^{\mathrm{tr}} := \mathbf{X}/\mathbf{T}$, the rows of $\mathbf{X}_{dr}^{\mathrm{tr}}$ indexed by $\text{EssI}$ match $\mathbf{X}_{dr}$ exactly:

\begin{strip}
\begin{center}
	\begin{tabular}{@{}c ccc c ccc@{}}
		\toprule
		Row & \multicolumn{3}{c}{$\mathbf{X}_{dr}^{\mathrm{tr}}$} & Match & \multicolumn{3}{c}{$\mathbf{X}_{dr}$} \\
		\midrule
		1  & 9.3474 & $-2.1987$ & 0.36975   & 1 & 9.3474 & $-2.1987$ & 0.36975 \\
		2  & 10.873 & $-2.2936$ & 5.8425    & --- & --- & --- & --- \\
		3  & 26.274 & $-6.056$  & 14.755    & --- & --- & --- & --- \\
		4  & 9.6726 & $-1.4383$ & $-0.48233$ & 2 & 9.6726 & $-1.4383$ & $-0.48233$ \\
		5  & 1.5103 & $-0.33985$ & 0.66519  & --- & --- & --- & --- \\
		6  & 3.8363 & $-0.56715$ & 1.5889   & --- & --- & --- & --- \\
		7  & 0.74016 & $-0.17706$ & 0.42364 & 3 & 0.74016 & $-0.17706$ & 0.42364 \\
		8  & 8.1786 & $-1.3697$ & 3.9544    & --- & --- & --- & --- \\
		9  & 3.0141 & 0.67612 & $-0.030121$ & 4 & 3.0141 & 0.67612 & $-0.030121$ \\
		10 & 12.234 & 2.6613 & 0.080632    & 5 & 12.234 & 2.6613 & 0.080632 \\
		\bottomrule
	\end{tabular}
\end{center}
\end{strip}

Rows $\{1,4,7,9,10\} = \text{EssI}$ match exactly, confirming $\mathbf{V}_{dr} = \mathbf{V}\,\mathbf{Q}_{dr}$ for the orthogonal rotation implicit in $\mathbf{T}$.

$\mathbf{U}$ and $\mathbf{U}_{dc}$: column-space rotation

Analogously, $\mathbf{T} = \mathbf{U}_{dc}\backslash \mathbf{U}$ and $\mathbf{Y}_{dc}^{\mathrm{tr}} := \mathbf{Y}/\mathbf{T}$ reproduce $\mathbf{Y}_{dc}$ exactly on the rows indexed by $\text{EssJ}$:

\begin{strip}
\begin{center}
	\begin{tabular}{@{}c ccc c ccc@{}}
		\toprule
		Row & \multicolumn{3}{c}{$\mathbf{Y}_{dc}^{\mathrm{tr}}$} & Match & \multicolumn{3}{c}{$\mathbf{Y}_{dc}$} \\
		\midrule
		3  & 3.4853 & 2.8071   & $-0.32441$ & 1 & 3.4853 & 2.8071   & $-0.32441$ \\
		5  & 10.65  & $-0.65812$ & $-0.79873$ & 2 & 10.65  & $-0.65812$ & $-0.79873$ \\
		6  & 4.1431 & 2.9576   & $-0.69207$ & 3 & 4.1431 & 2.9576   & $-0.69207$ \\
		9  & 10.938 & $-1.631$ & 0.064216   & 4 & 10.938 & $-1.631$ & 0.064216 \\
		15 & 7.7442 & 0.73692  & $-0.90501$ & 5 & 7.7442 & 0.73692  & $-0.90501$ \\
		17 & 10.146 & $-1.439$ & 1.3406     & 6 & 10.146 & $-1.439$ & 1.3406 \\
		18 & 3.4461 & 3.3961   & 1.5113     & 7 & 3.4461 & 3.3961   & 1.5113 \\
		\bottomrule
	\end{tabular}
\end{center}
\end{strip}

(All 13 remaining, non-essential rows produce no match, as expected, and are omitted here for brevity.) The matched rows are exactly $\text{EssJ} = \{3,5,6,9,15,17,18\}$.

\subsubsection*{$\mathbf{V}_{dc}$ and $\mathbf{V}_{drc}$, $\mathbf{U}_{dr}$ and $\mathbf{U}_{drc}$: the double-reduction consistency}

The same procedure applied to the once-reduced matrices confirms the commuting-square property from the general theory: $\mathbf{T}=\mathbf{V}_{drc}\backslash \mathbf{V}_{dc}$ transforms $\mathbf{X}_{dc}$ into $\mathbf{X}_{drc}$ exactly on the rows indexed by $\text{EssI}$,

\begin{strip}
\begin{center}
	\begin{tabular}{@{}c ccc c ccc@{}}
		\toprule
		Row & \multicolumn{3}{c}{$\mathbf{X}_{drc}^{\mathrm{tr}}$} & Match & \multicolumn{3}{c}{$\mathbf{X}_{drc}$} \\
		\midrule
		1  & 5.6743 & $-1.4045$ & 0.24813   & 1 & 5.6743 & $-1.4045$ & 0.24813 \\
		4  & 5.8223 & $-0.87413$ & $-0.31416$ & 2 & 5.8223 & $-0.87413$ & $-0.31416$ \\
		7  & 0.3751 & $-0.11109$ & 0.27982  & 3 & 0.3751 & $-0.11109$ & 0.27982 \\
		9  & 1.504  & 0.513     & $-0.019028$ & 4 & 1.504  & 0.513     & $-0.019028$ \\
		10 & 6.0873 & 2.0254    & 0.056648 & 5 & 6.0873 & 2.0254    & 0.056648 \\
		\bottomrule
	\end{tabular}
\end{center}
\end{strip}

and, symmetrically, $\mathbf{T} = \mathbf{U}_{drc}\backslash \mathbf{U}_{dr}$ transforms $\mathbf{Y}_{dr}$ into $\mathbf{Y}_{drc}$ exactly on the rows indexed by $\text{EssJ}$:

\begin{strip}
\begin{center}
	\begin{tabular}{@{}c ccc c ccc@{}}
		\toprule
		Row & \multicolumn{3}{c}{$\mathbf{Y}_{drc}^{\mathrm{tr}}$} & Match & \multicolumn{3}{c}{$\mathbf{Y}_{drc}$} \\
		\midrule
		3  & 4.0188 & 0.085184  & $-0.173$   & 1 & 4.0188 & 0.085184  & $-0.173$ \\
		5  & 4.0521 & $-1.1424$ & 0.095101   & 2 & 4.0521 & $-1.1424$ & 0.095101 \\
		6  & 4.4854 & $-0.26415$ & $-0.20604$ & 3 & 4.4854 & $-0.26415$ & $-0.20604$ \\
		9  & 3.1844 & $-0.47178$ & 0.23369   & 4 & 3.1844 & $-0.47178$ & 0.23369 \\
		15 & 4.0732 & $-0.93877$ & $-0.037652$ & 5 & 4.0732 & $-0.93877$ & $-0.037652$ \\
		17 & 2.8318 & 0.8295    & 0.31787   & 6 & 2.8318 & 0.8295    & 0.31787 \\
		18 & 4.2562 & 1.9851    & $-0.060355$ & 7 & 4.2562 & 1.9851    & $-0.060355$ \\
		\bottomrule
	\end{tabular}
\end{center}
\end{strip}

This case study confirms, on real chemometric data, exactly the structural results proved in general: the row space (here, the space of element-loadings $\mathbf{V}$) is rotated but not distorted when non-essential \emph{samples} are removed, and the column space ($\mathbf{U}$) is rotated but not distorted when non-essential \emph{elements} are removed. The Essential Data Points --- the convex-hull vertices of $\mathbf{X}$ and $\mathbf{Y}$ --- are precisely the rows/columns that survive this rotation exactly; every non-essential point fails to match because it was, by construction, expressible as a combination of the essential ones and carries no independent information once removed.

The other subspace comparisons are not possible by this naive way, because there is not any simple executable Matlab command to get a proper rotation matrix similar to $\mathbf{T}$s, since the incompatible matrices.
It can be illustrated using the Matlab command \verb+subspace+  for comparison of the compatible cases: \verb+subspace+(V, Vdr) = $1.4182\cdot10^{-15}$, \verb+subspace+(U, Udc) = $8.9883\cdot10^{-16}$, \verb+subspace+(Vdc, Vdrc) = $4.5371\cdot10^{-16}$, \verb+subspace+(Udr, Udrc) = $4.5809\cdot10^{-16}$. However the commands of \verb+subspace+(Vdr, Vdc), \verb+subspace+(Udr, Udc), \verb+subspace+(Vdr, Vdrc), \verb+subspace+(Udc, Udrc) result in error message: \textsl{"Incorrect dimensions for matrix multiplication. Check that the number of columns in the first matrix matches the number of rows in the second matrix"}. Fortunately, based on the duality between the row-space and column-space of every matrix: $\mathfrak{Col}(A^\mathsf{T}) = \mathfrak{Row}(A)$. Thus \verb+subspace+(Vdr$'$, Vdc$'$) = $3.5928\cdot10^{-16}$, \verb+subspace+(Udr$'$, Udc$'$) = $0$, \verb+subspace+(Vdr$'$, Vdrc$'$) = $3.3990\cdot10^{-16}$, \verb+subspace+(Udc$'$, Udrc$'$) = $3.8589\cdot10^{-16}$.
We can also apply the sophisticated derivations above, which are generally rigorous, see the next subsection.

\subsection{Illustration in strict way}

To confirm the subspace-equivalence theory developed in the main text in strict way, the script in Section~\ref{App:Sec-4-fundSubSpStrct} Appendix was run on the same real source-apportionment matrix $\mathbf{D} = \mathbf{C}\,\mathbf{A}^\mathsf{T} \in \mathbb{R}^{10\times20}$ (rank $r=3$) rather than on synthetic data. The Essential Data Points --- $5$ of the $10$ rows and $7$ of the $20$ columns --- were kept fixed at the convex-hull vertex sets $\mathcal{I}=\{1,4,7,9,10\}$ and $\mathcal{J}=\{3,5,6,9,15,17,18\}$ identified previously, and every identity derived analytically was checked against the corresponding numerical residual. All results below are reported as Frobenius-norm residuals, which should vanish to floating-point precision ($\sim 10^{-14}$--$10^{-16}$) whenever the underlying identity is exact.

\subsubsection*{Type 1: Pure rotations}

These checks confirm that the row space of $\mathbf{D}$ is preserved \emph{exactly} under row deletion, and the column space of $\mathbf{D}$ is preserved \emph{exactly} under column deletion, in both cases via an orthogonal rotation.

\begin{strip}
\begin{center}
	\begin{tabular}{@{}l l S[table-format=1.3e2]@{}}
		\toprule
		Quantity & Identity checked & {Residual} \\
		\midrule
		$\mathbf{Q}_{dr}=\mathbf{V}^\mathsf{T}\,\mathbf{V}_{dr}$ orthogonality & $\|\mathbf{Q}_{dr}^\mathsf{T}\,\mathbf{Q}_{dr}-\mathbf{I}\|$ & 9.612e-16 \\
		Row space preserved & $\|\mathbf{V}\,\mathbf{Q}_{dr}-\mathbf{V}_{dr}\|_F$ & 1.849e-15 \\
		$\mathbf{Q}_{dc}=\mathbf{U}^\mathsf{T}\,\mathbf{U}_{dc}$ orthogonality & $\|\mathbf{Q}_{dc}^\mathsf{T}\,\mathbf{Q}_{dc}-\mathbf{I}\|$ & 1.337e-15 \\
		Column space preserved & $\|\mathbf{U}\,\mathbf{Q}_{dc}-\mathbf{U}_{dc}\|_F$ & 9.479e-16 \\
		\bottomrule
	\end{tabular}
\end{center}
\end{strip}

\subsubsection*{Type 2: Ambient-shrinking isomorphisms}

These checks confirm that on the side where the ambient space actually shrinks (rows for $\mathbf{U}$, columns for $\mathbf{V}$), the relevant map is a general invertible transform, \emph{not} an orthogonal one.

\begin{strip}
\begin{center}
	\begin{tabular}{@{}l l S[table-format=1.3e2]@{}}
		\toprule
		Quantity & Identity checked & {Residual} \\
		\midrule
		$\mathbf{U}(\mathcal{I},:) = \mathbf{U}_{dr}N_{dr}$ & $\|\mathbf{U}(\mathcal{I},:) - \mathbf{U}_{dr}\,\mathbf{N}_{dr}\|_F$ & 2.523e-16 \\
		$\mathbf{V}(\mathcal{J},:) = \mathbf{V}_{dc}\,\mathbf{M}$ & $\|\mathbf{V}(\mathcal{J},:) - \mathbf{V}_{dc}\,\mathbf{M}\|_F$ & 4.983e-16 \\
		$\mathbf{N}_{dr}$ orthogonality (expected to fail) & $\|\mathbf{N}_{dr}^\mathsf{T}\,\mathbf{N}_{dr}-\mathbf{I}\|$ & \textbf{9.980e-01} \\
		\bottomrule
	\end{tabular}
\end{center}
\end{strip}

The last row is the key confirmation: $N_{dr}$ is emphatically \emph{not} orthogonal (residual $\approx 1$, not $\approx 10^{-15}$), verifying that the ambient-shrinking map is a genuine isomorphism rather than a rotation, exactly as predicted.

\subsubsection*{Derived clean relations for $\mathbf{X}$ and $\mathbf{Y}$}

Although $\mathbf{U}(\mathcal{I},:)$ and $\mathbf{V}(\mathcal{J},:)$ transform non-orthogonally, the $\mathbf{S}$-factors cancel exactly when forming $\mathbf{X}=\mathbf{U}\,\mathbf{S}$ and $\mathbf{Y}=\mathbf{V}\,\mathbf{S}$, restoring a clean rotation on the \emph{restricted} factor:

\begin{strip}
\begin{center}
	\begin{tabular}{@{}l l S[table-format=1.3e2]@{}}
		\toprule
		Quantity & Identity checked & {Residual} \\
		\midrule
		$\mathbf{X}_{dr} = \mathbf{X}(\mathcal{I},:)Q_{dr}$ & $\|\mathbf{X}_{dr} - \mathbf{X}(\mathcal{I},:)\,\mathbf{Q}_{dr}\|_F$ & 1.102e-14 \\
		$\mathbf{Y}_{dc} = \mathbf{Y}(\mathcal{J},:)\,\mathbf{Q}_{dc}$ & $\|\mathbf{Y}_{dc} - \mathbf{Y}(\mathcal{J},:)\,\mathbf{Q}_{dc}\|_F$ & 1.464e-14 \\
		\bottomrule
	\end{tabular}
\end{center}
\end{strip}

\subsubsection*{Type 3: $\boldsymbol{\Sigma}$-mismatch isomorphisms}

These checks confirm that even where the ambient space is \emph{not} shrinking (full $X$ under column deletion, full $\mathbf{Y}$ under row deletion), the transform is still a general invertible matrix rather than a rotation, because the singular values themselves differ ($\boldsymbol{\Sigma} \neq \boldsymbol{\Sigma}_{dc}$, $\boldsymbol{\Sigma} \neq \boldsymbol{\Sigma}_{dr}$):

\begin{strip}
\begin{center}
	\begin{tabular}{@{}l l S[table-format=1.3e2]@{}}
		\toprule
		Quantity & Identity checked & {Residual} \\
		\midrule
		$\mathbf{X}_{dc} = \mathbf{X}\,\mathbf{K}_{dc}$ & $\|\mathbf{X}_{dc} - \mathbf{X}\,\mathbf{K}_{dc}\|_F$ & 6.273e-15 \\
		$\mathbf{K}_{dc}$ closed form & $\|\mathbf{K}_{dc} - \mathbf{S}^{-1}\,\mathbf{Q}_{dc}S_{dc}\|_F$ & 7.069e-16 \\
		$\mathbf{Y}_{dr} = \mathbf{Y}\,\mathbf{L}_{dr}$ & $\|\mathbf{Y}_{dr} - \mathbf{Y}\,\mathbf{L}_{dr}\|_F$ & 5.336e-15 \\
		$\mathbf{L}_{dr}$ closed form & $\|\mathbf{L}_{dr} - \mathbf{S}^{-1}\,\mathbf{Q}_{dr}\,\mathbf{S}_{dr}\|_F$ & 6.188e-16 \\
		\bottomrule
	\end{tabular}
\end{center}
\end{strip}

\subsubsection*{Double deletion: path-independence}

Finally, the double-deletion case confirms both that the row space of $\mathbf{D}(:,\mathcal{J})$ is preserved exactly under further row deletion, and that composing the column-deletion isomorphism with this rotation reproduces $\mathbf{X}_{drc}$ exactly:

\begin{strip}
\begin{center}
	\begin{tabular}{@{}l l S[table-format=1.3e2]@{}}
		\toprule
		Quantity & Identity checked & {Residual} \\
		\midrule
		$\mathbf{Q}_{drc}=\mathbf{V}_{dc}^\mathsf{T}\,\mathbf{V}_{drc}$ orthogonality & $\|\mathbf{Q}_{drc}^\mathsf{T}\,\mathbf{Q}_{drc}-\mathbf{I}\|$ & 9.905e-16 \\
		Subspace preserved & $\|\mathbf{V}_{dc}\,\mathbf{Q}_{drc}-\mathbf{V}_{drc}\|_F$ & 1.321e-15 \\
		Composed transform & $\|\mathbf{X}_{drc} - \mathbf{X}(\mathcal{I},:)\,\mathbf{M}^\mathsf{T}\,\mathbf{Q}_{drc}\|_F$ & 1.218e-14 \\
		\bottomrule
	\end{tabular}
\end{center}
\end{strip}

\subsection*{Summary of Findings}

The rank check confirms $\operatorname{rank}(\mathbf{D}_{dr}) = \operatorname{rank}(\mathbf{D}_{dc}) = \operatorname{rank}(\mathbf{D}_{drc}) = r = 3$, as required for a genuine convex-hull-vertex (rank-preserving) deletion. Table collects the qualitative conclusion for every quantity examined:

\begin{strip}
\begin{center}
	\begin{tabular}{@{}l l@{}}
		\toprule
		Pair & Relationship confirmed numerically \\
		\midrule
		$\mathbf{V},\ \mathbf{V}_{dr}$ & Identical subspace of $\mathbb{R}^n$ (rotation $\mathbf{Q}_{dr}$) \\
		$\mathbf{U},\ \mathbf{U}_{dc}$ & Identical subspace of $\mathbb{R}^m$ (rotation $\mathbf{Q}_{dc}$) \\
		$\mathbf{U}(\mathcal{I},:),\ \mathbf{U}_{dr}$ & Isomorphic; ambient shrinks $m\to p$ (map $\mathbf{N}_{dr}$, not orthogonal) \\
		$\mathbf{V}(\mathcal{J},:),\ \mathbf{V}_{dc}$ & Isomorphic; ambient shrinks $n\to q$ (map $\mathbf{M}$, not orthogonal) \\
		$\mathbf{X},\ \mathbf{X}_{dc}$ & Related by invertible $\mathbf{K}_{dc}$ ($\boldsymbol{\Sigma}$ changes, ambient fixed) \\
		$\mathbf{Y},\ \mathbf{Y}_{dr}$ & Related by invertible $\mathbf{L}_{dr}$ ($\boldsymbol{\Sigma}$ changes, ambient fixed) \\
		$\mathbf{X}_{dr},\ \mathbf{X}(\mathcal{I},:)$ & Clean rotation $\mathbf{Q}_{dr}$ (always exact) \\
		$\mathbf{Y}_{dc},\ \mathbf{Y}(\mathcal{J},:)$ & Clean rotation $\mathbf{Q}_{dc}$ (always exact) \\
		$\mathbf{V}_{drc},\ \mathbf{X}_{drc}$ & Obtained by composing the column isomorphism with a further rotation \\
		\bottomrule
	\end{tabular}
	\label{tab:summary}
\end{center}
\end{strip}

Every residual reported above is at or below $\sim 10^{-14}$, except the deliberately-negative control ($\mathbf{N}_{dr}$ orthogonality, $\approx 1$), which confirms --- rather than contradicts --- the theory: it demonstrates that the ambient-shrinking maps are genuinely non-orthogonal, as the analytical derivation predicted.

\section{Conclusions}

To evaluate the effect of essential-data-point (EDP) based data reduction on
the geometry of the resulting polygons, the inner polygons obtained from the
reduced data sets were compared with the corresponding polygon of the
original data set (Fig.~\ref{fig9:disortedInOutPols}).
Data reduction was found to alter not only the
coordinates of the polygon but also its orientation relative to the polygon
obtained from the original data set: the inner polygon of a reduced data set
appears \emph{rotated} with respect to that of the original.

To determine whether this apparent rotation is a genuine orthogonal rotation
or a more general, non-orthogonal isomorphism, the transformation matrix
between the original and each reduced data set was estimated by least
squares, e.g.,
$
\mathbf{V}=\mathbf{V}_{dr}\, \mathbf{T}_{drv} \;\Rightarrow\; \mathbf{T}_{drv}=\mathbf{V}^\mathsf{T}_{dr}\,\mathbf{V},
\quad
\mathbf{U}=\mathbf{U}_{dr}\, \mathbf{T}_{dru} \;\Rightarrow\; \mathbf{T}_{dru}=\mathbf{U}^\mathsf{T}_{dr}\,\mathbf{U}.
$
For an orthogonal transformation, $\mathbf{T^\mathsf{T}}\mathbf{T} = \mathbf{I}$ must hold exactly. This
check was carried out for all three data-reduction scenarios (row-wise,
column-wise, jointly) in both the row and column spaces (Fig.~\ref{fig10:TransformatMatirces}).

\textbf{Main finding.} The results confirm, both qualitatively (Fig.~\ref{fig9:disortedInOutPols},
Fig.~\ref{fig10:TransformatMatirces}) and rigorously (Section~\ref{subsec:subspEqunderCHdel}), a single structural rule: of the two
transformation matrices associated with any EDP reduction, exactly the ones
belonging to the \emph{non-reduced} mode are orthogonal ($\mathbf{T^\mathsf{T}}\mathbf{T} = \mathbf{I}$ to
machine precision, $\sim 10^{-15}$--$10^{-16}$ in our numerical tests),
while those belonging to the \emph{reduced} mode are demonstrably
non-orthogonal (residual $\approx 1$ rather than $\approx 0$ in our negative
control). In row-wise reduction, the row-space transformation is an exact
rotation while the column-space transformation is a non-orthogonal
isomorphism; in column-wise reduction the roles reverse; and in joint
row-and-column reduction, neither transformation is guaranteed to be
orthogonal. Consequently, the geometric changes induced by EDP-based removal
of non-essential rows and/or columns \emph{cannot, in general, be described
as an orthogonal rotation}, and treating them as such --- e.g., by
superimposing polygons without first estimating and applying the correct
transformation --- will misrepresent the true relationship between the
reduced and original data geometry.

\textbf{Practical recommendation.} Based on the investigations above, the
central, actionable conclusion of this work is: \emph{if EDPs are used for
any subsequent interpretation of the abstract row or column space
(e.g., \cite{Beyramysoltan2021,KHODADADIKARIMVAND2023,Coic2023,OLARINI2024,QING2024,})}, the correct mode-specific
transformation ($\mathbf{T}_v$ or $\mathbf{T}_u$, orthogonal or not) must be estimated and
applied \emph{before} comparing the reduced and original representations.
Skipping this step and comparing raw coordinates, or assuming the
discrepancy is a rigid rotation removable by a generic Procrustes fit, can
lead to a false impression of either preserved or lost structure.

\textbf{Limitations and future work.} The present analysis is based on
noise-free simulated data and a single real source-apportionment data set;
the behavior of the transformation matrices under realistic measurement
noise, and their sensitivity to the specific convex-hull algorithm used to
identify EDPs, remain to be characterized. Extending the present
subspace-equivalence framework to three-way (\cite{Vitale2024}) and higher-order data structures,
and deriving explicit error bounds on the non-orthogonal isomorphisms
($\mathbf{N}_{dr}$, $\mathbf{M}$) as a function of how far a deleted point lies from the
convex hull boundary, would further clarify when EDP-based reduction is
safe to use without correction and when it is not.

\begin{figure}
	\centering
	\includegraphics[width=0.45\textwidth]{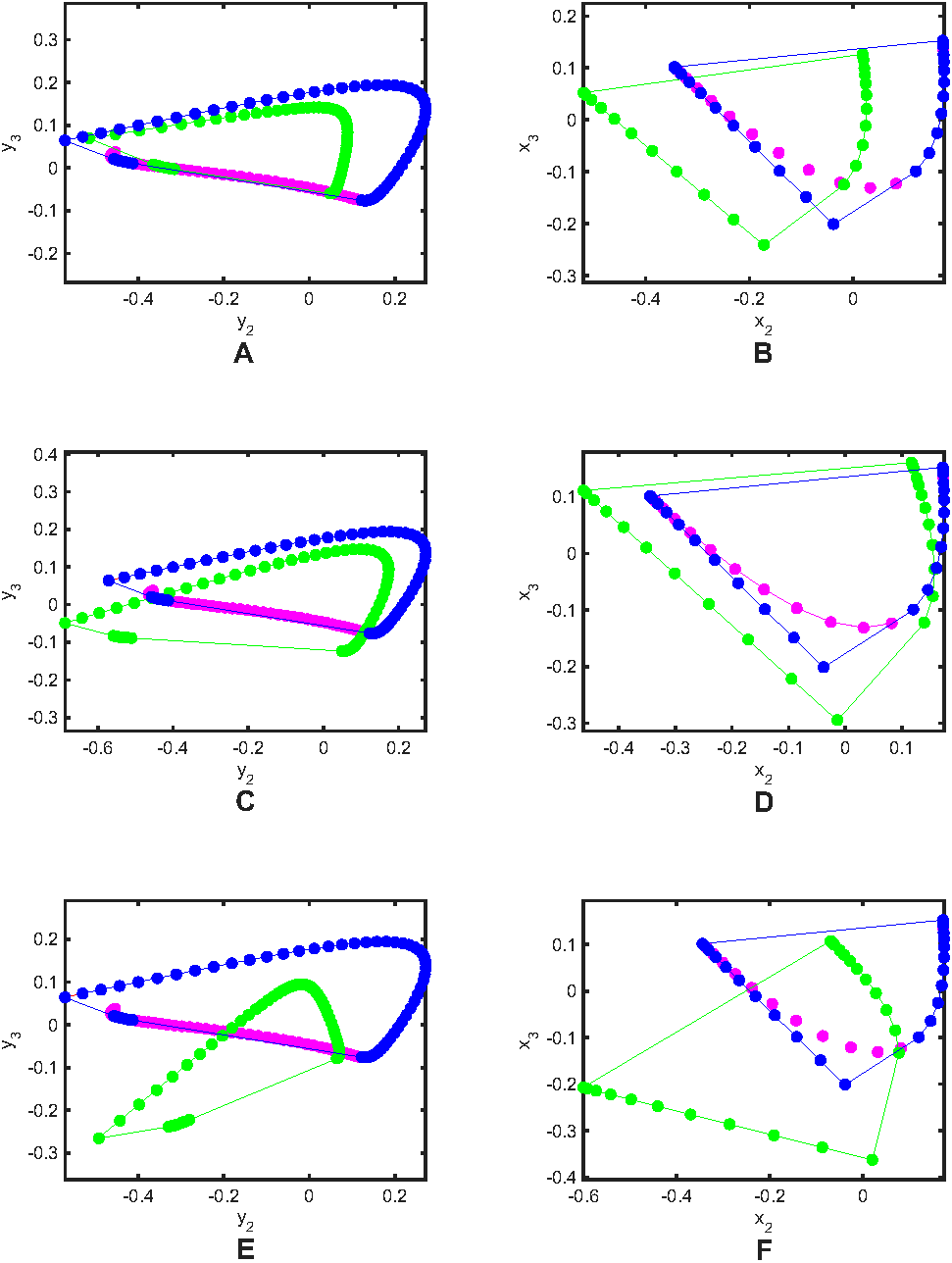} \\
	\caption{Comparison of the inner polygons obtained from the SVD of the original and reduced datasets in the $V$- and $U$-spaces. \textbf{First row}: comparison between the original dataset ($\mathbf{D}$) and the row-reduced dataset ($\mathbf{D}_{dr}$); (A) comparison of the inner polygons obtained from the $V$-space and (B) comparison of the inner polygons obtained from the $U$-space. \textbf{Second row}: comparison between ($\mathbf{D}$) and the column-reduced dataset ($\mathbf{D}_{dc}$); (C) comparison of the inner polygons obtained from the $V$-space and (D) comparison of the inner polygons obtained from the $U$-space. \textbf{Third row}: comparison between ($\mathbf{D}$) and the dataset reduced in both rows and columns, ($\mathbf{D}_{drc}$); (E) comparison of the inner polygons obtained from the $V$-space and (F) comparison of the inner polygons obtained from the $U$-space. In all panels, the red polygons represent the original dataset, whereas the green polygons represent the corresponding reduced dataset. The blue polygons represent the corrected reduced dataset. Notably, the blue and red polygons completely overlap, indicating that the corrected reduced dataset reproduces the geometry of the original dataset. The pink points correspond to the non-essential data points.}
	\label{fig9:disortedInOutPols}
\end{figure}

\begin{figure}
	\centering
	\includegraphics[width=0.35\textwidth]{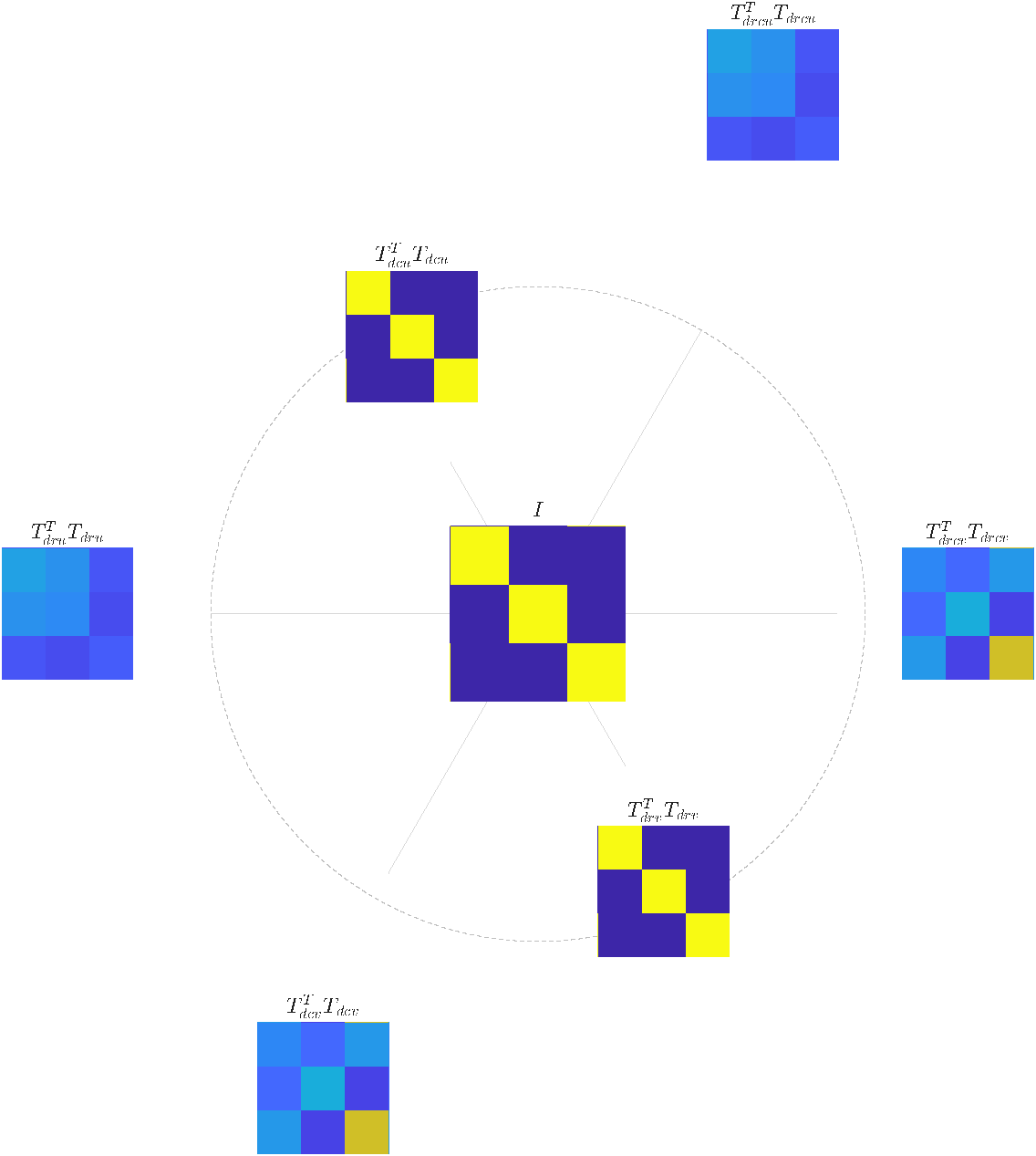} \\
	\caption{Heatmaps of the transformation matrices obtained by least-squares comparison of the SVD-derived $\mathbf{U}$ and $\mathbf{V}$ matrices between the original dataset $\mathbf{D}$ and the corresponding reduced datasets. The six transformation matrices correspond to the comparisons between $\mathbf{D}$ and $\mathbf{D}_{dr}$ in the $U$- and $V$-spaces, between $\mathbf{D}$ and $\mathbf{D}_{dc}$ in the $U$- and $V$-spaces, and between $\mathbf{D}$ and $\mathbf{D}_{drc}$ in the $U$- and $V$-spaces, respectively. Among these six transformations, two are orthogonal whereas the remaining four are non-orthogonal.}
	\label{fig10:TransformatMatirces}
\end{figure}

\section{Appendix}

\subsection{Short introduction to the Strang diagram of the four fundamental subspaces}\label{App:Sec-4-fundSubSp}
Fig.~\ref{fig11:4fundspsc} depicts the classic Strang diagram~\cite{strang2023}, showing exactly how a linear map $A:\mathbb{C}^n \to \mathbb{C}^m$ routes every input vector to its output. Here's a detailed walkthrough using the picture's own notation.

\begin{figure}
	\centering
	\includegraphics[width=0.45\textwidth]{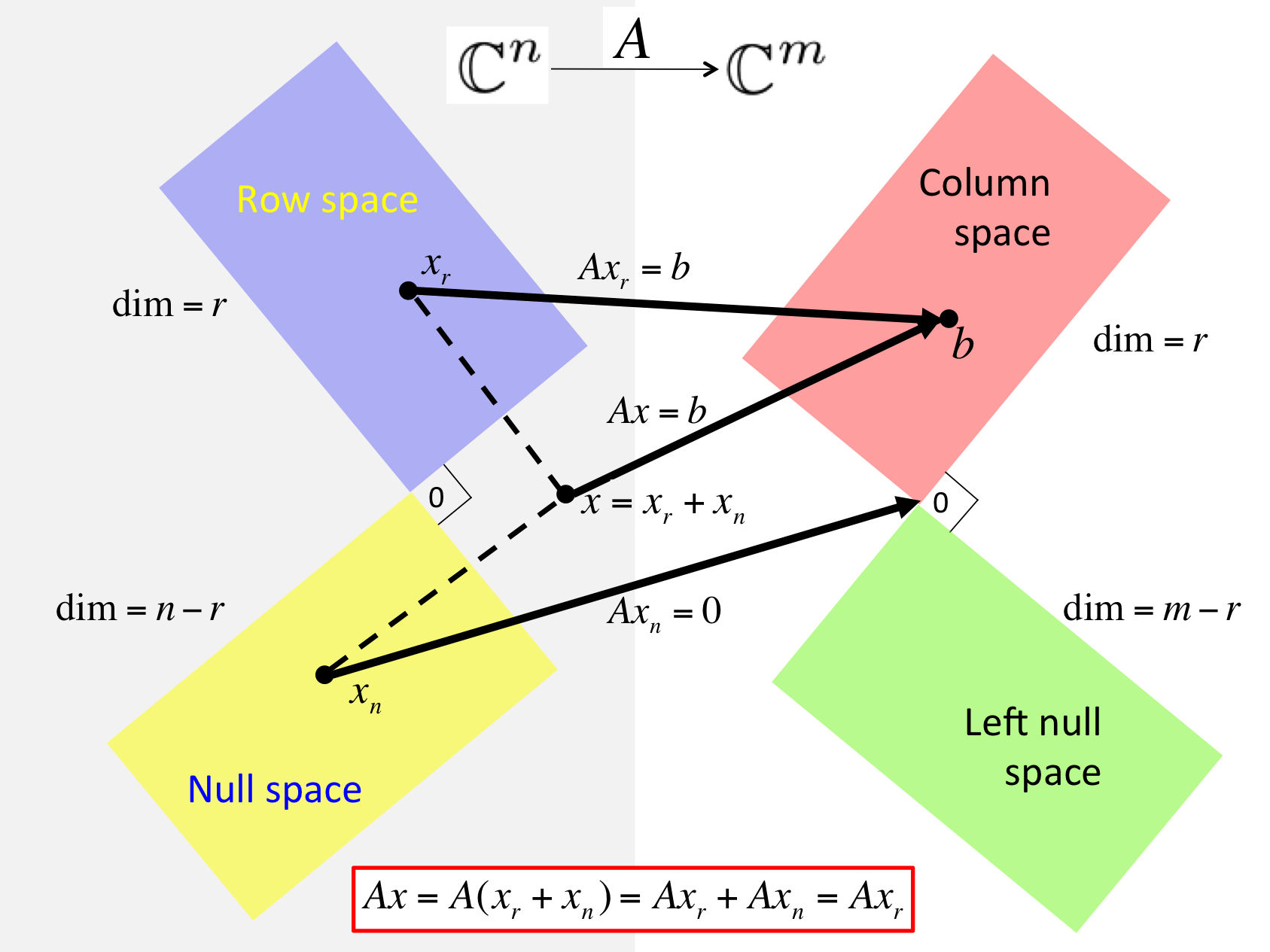} \\
	\caption{Illustration of the four fundamental subspaces and the mapping of a vector $x\in \mathbb{C}^n$ by matrix $A \in \mathbb{C}^{m\times n}$. (Source: \url{https://www.cs.utexas.edu/~flame/laff/alaff/chapter04-four-fundamental-spaces.html}) }
	\label{fig11:4fundspsc}
\end{figure}

\subsubsection*{The two domains}

\textbf{Left side ($\mathbb{C}^n$):} the input/domain space. It splits into two orthogonal pieces:
\begin{itemize}
	\item \textbf{Row space} (blue, top-left), dimension $r$ --- this is $\mathfrak{Col}(A^\mathsf{T})$, the space spanned by the rows of $A$.
	\item \textbf{Null space} (yellow, bottom-left), dimension $n-r$ --- this is $\mathfrak{Null}(A)$, the set of vectors $A$ annihilates.
\end{itemize}
These two occupy complementary dimensions of $\mathbb{C}^n$ and together account for all $n$ dimensions: $r + (n-r) = n$. 
\textbf{Right side ($\mathbb{C}^m$):} the output/codomain space. It also splits into two orthogonal pieces:
\begin{itemize}
	\item \textbf{Column space} (red, top-right), dimension $r$ --- this is $\mathfrak{Col}(A)$, the space spanned by the columns of $A$, i.e.\ every vector $A$ can actually produce.
	\item \textbf{Left null space} (green, bottom-right), dimension $m-r$ --- this is $\mathfrak{Null}(A^\mathsf{T})$, the set of output directions $A$ can never reach.
\end{itemize}
Again these complement each other: $r + (m-r) = m$.

\subsubsection*{What the arrows show}
Take any input vector $x \in \mathbb{C}^n$. The diagram decomposes it uniquely as $x = x_r + x_n$, 
where $x_r$ lies in the row space (blue) and $x_n$ lies in the null space (yellow) --- the dashed lines show this splitting, meeting at the origin $0$ on the left.

\begin{itemize}
	\item The arrow labeled $Ax_n = 0$ shows that the null-space component is completely destroyed: it maps straight to the origin $0$ on the right side (in the left-null-space corner), contributing nothing to the output.
	\item The arrow labeled $Ax_r = b$ shows that the row-space component alone determines the entire output: $x_r$ maps into the column space (red) and lands exactly at $b$.
	\item The arrow labeled $Ax = b$ is simply the full map applied to the original $x$ --- and it lands at the \emph{same} point $b$ as $Ax_r$ does, because the null-space part contributed zero.
\end{itemize}

\subsubsection*{The boxed identity}
\[
\boxed{Ax = A(x_r + x_n) = Ax_r + Ax_n = Ax_r}
\]
This equation is the entire content of the picture in one line: applying $A$ to any vector only ever depends on that vector's row-space component. The null-space component is invisible to the transformation --- it is thrown away --- while the row-space component alone is responsible for generating the observed output $b$ inside the column space.

\subsubsection*{Why the picture is drawn this way}

\begin{itemize}
	\item $A$ acts as a genuine \textbf{invertible correspondence} between the row space (dimension $r$) and the column space (dimension $r$) --- this is the ``one-to-one, onto'' restriction of $A$ that survives once the null space is quotiented out. That is why the diagram draws $Ax_r = b$ as a clean, direct arrow between those two same-dimensional blue and red boxes.
	\item The null space and left null space (yellow, green) are the two ``dead zones'': one on the input side (things that vanish), one on the output side (places nothing can go). The picture places them at the bottom to emphasize that they are orthogonal complements of the row space and column space respectively, connected to $A$ only through the trivial map $x_n \mapsto 0$.
\end{itemize}

This single diagram encodes rank-nullity on both sides, $r + (n-r) = n, r + (m-r) = m$ 
and the fact that $A$ restricted to the row space is a bijection onto the column space --- the geometric essence of the rank of a matrix.

\subsection{MATLAB Code: Essential Data Points (EDPs) Naive Verification}
\label{app:matlab-edp-naive}

The following code reproduces, on the Henry \& Kim (1990) source-apportionment dataset~\cite{HENRY1990}, the four subspace-equivalence checks discussed in the main text: it identifies the Essential Data Points (convex-hull vertices) of the row-points and column-points, builds the reduced matrices $R_{dr}$, $R_{dc}$, $R_{drc}$, and numerically verifies that $V \leftrightarrow V_{dr}$, $U \leftrightarrow U_{dc}$, $V_{dc}\leftrightarrow V_{drc}$, and $U_{dr}\leftrightarrow U_{drc}$ are related by exact orthogonal rotations on the essential rows/columns.

\subsubsection*{Main script}
\begin{strip}
\begin{lstlisting}[style=matlabstyle, caption={Main analysis script}, label={lst:edp-main}]
	% Essential Data Points (EDPs) are correct
	clearvars
	
	% data from the paper of Henry & Kim, ChemoLab, 8 (1990) 205-216
	% https://doi.org/10.1016/0169-7439(90)80136-T
	
	% Source compositions
	%      Marine    Udust     Auto
	C = [[ 0.40000  0.01250  0.00000]; ...   % Na
	     [ 0.00000  0.08840  0.01100]; ...   % Al
	     [ 0.00000  0.22300  0.00820]; ...   % Si
	     [ 0.40000  0.00000  0.03000]; ...   % Cl
	     [ 0.01400  0.01030  0.00072]; ...   % K
	     [ 0.01400  0.02440  0.01250]; ...   % Ca
	     [ 0.00000  0.00640  0.00000]; ...   % Ti
	     [ 0.00000  0.06000  0.02100]; ...   % Fe
	     [ 0.00200  0.00020  0.05000]; ...   % Br
	     [ 0.00000  0.00370  0.20000]];      % Pb
	
	% Source apportionment ug/m3
	%      Marine    Udust     Auto
	A = [[   3         8        19]; ...
	     [   3         8        16]; ...
	     [   5         7        12]; ...
	     [   6        49        13]; ...
	     [   6        39         7]; ...
	     [   6         9        12]; ...
	     [   6        17        18]; ...
	     [   2         6         6]; ...
	     [   4        42         7]; ...
	     [   4        49        20]; ...
	     [   4        29        14]; ...
	     [   5        44         9]; ...
	     [   5        32        12]; ...
	     [   5         9        11]; ...
	     [   6        26         8]; ...
	     [   3         8        14]; ...
	     [   2        39        11]; ...
	     [   3         6        20]; ...
	     [   4        48        15]; ...
	     [   5        37        14]];
	
	ElemsTxt = {'Na'; 'Al'; 'Si'; 'Cl'; 'K'; 'Ca'; 'Ti'; 'Fe'; 'Br'; 'Pb'};
	ApporTxt = {'Marine'; 'Udust'; 'Auto'};
	
	% Convex Hull (CH) indices for inner polygons
	% indices for inner polygons = Essential Data Points (EDPs)
	IndsH3.XnInn = [
	1     4
	4     9
	9    10
	10    7
	7     1];
	EssI = unique(IndsH3.XnInn);
	
	IndsH3.YnInn = [
	5    15
	15    6
	6     3
	3    18
	18   17
	17    9
	9     5];
	EssJ = unique(IndsH3.YnInn);
	
	epsi = 1e-15;
	
	%% Full data matrix and its SVD-based loadings
	R = C*A'; R_dual = R';
	[I, J] = size(R);
	
	[Um, Sm, Vm] = svds(R,3);
	loads{1} = Um*Sm; loads{2} = Vm;
	[sgnsXV, loadsXV] = sign_flip(loads, R);
	
	loads{1} = Vm*Sm; loads{2} = Um;
	[sgnsYU, loadsYU] = sign_flip(loads, R_dual);
	
	X = loadsXV{1}; V = loadsXV{2};
	Y = loadsYU{1}; U = loadsYU{2};
	
	%% Row-deleted reduction (keep essential rows I, all columns)
	C_ = C(EssI,:); A_ = A; ElemsTxt_ = ElemsTxt(EssI,:); ApporTxt_ = ApporTxt;
	R_ = C_*A_'; R_dual_ = R_';
	
	[Um, Sm, Vm] = svds(R_,3);
	loads{1} = Um*Sm; loads{2} = Vm;
	[sgnsXV, loadsXV] = sign_flip(loads, R_);
	
	loads{1} = Vm*Sm; loads{2} = Um;
	[sgnsYU, loadsYU] = sign_flip(loads, R_dual_);
	
	X_ = loadsXV{1}; V_ = loadsXV{2};
	Y_ = loadsYU{1}; U_ = loadsYU{2};
	Xdr = X_; Ydr = Y_; Vdr = V_; Udr = U_;
	
	%% Column-deleted reduction (all rows, keep essential columns J)
	C_ = C; A_ = A(EssJ,:); ElemsTxt_ = ElemsTxt; ApporTxt_ = ApporTxt;
	R_ = C_*A_'; R_dual_ = R_';
	
	[Um, Sm, Vm] = svds(R_,3);
	loads{1} = Um*Sm; loads{2} = Vm;
	[sgnsXV, loadsXV] = sign_flip(loads, R_);
	
	loads{1} = Vm*Sm; loads{2} = Um;
	[sgnsYU, loadsYU] = sign_flip(loads, R_dual_);
	
	X_ = loadsXV{1}; V_ = loadsXV{2};
	Y_ = loadsYU{1}; U_ = loadsYU{2};
	Xdc = X_; Ydc = Y_; Vdc = V_; Udc = U_;
	
	%% Row- and column-deleted reduction (keep essential rows I and columns J)
	C_ = C(EssI,:); A_ = A(EssJ,:);
	ElemsTxt_ = ElemsTxt(EssI,:); ApporTxt_ = ApporTxt;
	R_ = C_*A_'; R_dual_ = R_';
	
	[Um, Sm, Vm] = svds(R_,3);
	loads{1} = Um*Sm; loads{2} = Vm;
	[sgnsXV, loadsXV] = sign_flip(loads, R_);
	
	loads{1} = Vm*Sm; loads{2} = Um;
	[sgnsYU, loadsYU] = sign_flip(loads, R_dual_);
	
	X_ = loadsXV{1}; V_ = loadsXV{2};
	Y_ = loadsYU{1}; U_ = loadsYU{2};
	Xdrc = X_; Ydrc = Y_; Vdrc = V_; Udrc = U_;
	
	%% Verification 1: equivalence of the subspaces of V and Vdr
	disp(' ')
	disp('Equivalence of the subspaces of V and Vdr')
	T = Vdr\V;
	Xdr_tr = X/T;
	[tf, loc] = ismember(round(Xdr_tr,10), round(Xdr,10), 'rows');
	Xdr_ext = NaN(size(Xdr_tr));
	Xdr_ext(logical(loc),:) = Xdr;
	ResultTable = table((1:size(Xdr_tr,1))', Xdr_tr, loc, Xdr_ext, ...
	'VariableNames', {'A_Row_Index', 'Matrix_Xdr_tr', 'B_Row_Index', 'Matrix_Xdr'});
	ResultTable.B_Row_Index(~tf) = NaN;
	disp(ResultTable);
	
	%% Verification 2: equivalence of the subspaces of U and Udc
	disp(' ')
	disp('Equivalence of the subspaces of U and Udc')
	T = Udc\U;
	Ydc_tr = Y/T;
	[tf, loc] = ismember(round(Ydc_tr,10), round(Ydc,10), 'rows');
	Ydc_ext = NaN(size(Ydc_tr));
	Ydc_ext(logical(loc),:) = Ydc;
	ResultTable = table((1:size(Ydc_tr,1))', Ydc_tr, loc, Ydc_ext, ...
	'VariableNames', {'A_Row_Index', 'Matrix_Ydc_tr', 'B_Row_Index', 'Matrix_Ydc'});
	ResultTable.B_Row_Index(~tf) = NaN;
	disp(ResultTable);
	
	%% Verification 3: equivalence of the subspaces of Vdc and Vdrc
	disp(' ')
	disp('Equivalence of the subspaces of Vdc and Vdrc')
	T = Vdrc\Vdc;
	Xdrc_tr = Xdc/T;
	[tf, loc] = ismember(round(Xdrc_tr,10), round(Xdrc,10), 'rows');
	Xdrc_ext = NaN(size(Xdrc_tr));
	Xdrc_ext(logical(loc),:) = Xdrc;
	ResultTable = table((1:size(Xdrc_tr,1))', Xdrc_tr, loc, Xdrc_ext, ...
	'VariableNames', {'A_Row_Index', 'Matrix_Xdrc_tr', 'B_Row_Index', 'Matrix_Xdrc'});
	ResultTable.B_Row_Index(~tf) = NaN;
	disp(ResultTable);
	
	%% Verification 4: equivalence of the subspaces of Udr and Udrc
	disp(' ')
	disp('Equivalence of the subspaces of Udr and Udrc')
	T = Udrc\Udr;
	Ydrc_tr = Ydr/T;
	[tf, loc] = ismember(round(Ydrc_tr,10), round(Ydrc,10), 'rows');
	Ydrc_ext = NaN(size(Ydrc_tr));
	Ydrc_ext(logical(loc),:) = Ydrc;
	ResultTable = table((1:size(Ydrc_tr,1))', Ydrc_tr, loc, Ydrc_ext, ...
	'VariableNames', {'A_Row_Index', 'Matrix_Ydrc_tr', 'B_Row_Index', 'Matrix_Ydrc'});
	ResultTable.B_Row_Index(~tf) = NaN;
	disp(ResultTable);
\end{lstlisting}
\end{strip}

\subsubsection*{Helper function: \texttt{sign\_flip}}

\begin{strip}
\begin{lstlisting}[style=matlabstyle, caption={Sign-flip normalization for SVD/PCA loadings (Bro, Acar \& Kolda, 2007)}, label={lst:sign-flip}]
	function [sgns,loads] = sign_flip(loads,X)
	% [sgns,loads] = sign_flip(loads,X)
	% loads is a cell of loadings
	% X     is the data array
	% sgns is a MxF matrix where sgns(m,f) is the sign of loading f in mode m
	%
	% If using svd ([u,s,v]=svd(X)) then loads{1}=u*s; and loads{2}=v;
	% If using an F-component PCA model ([t,p]=pca(X,F), then loads{1}=t; and
	% loads{2}=p;
	%
	% Copyright 2007 R. Bro, E. Acar, T. Kolda - www.models.life.ku.dk
	% for PARAFAC and two-way
	% https://doi.org/10.1002/cem.1122
	
	if isa(X,'dataset')
	inc = X.includ;
	X = X.data(inc{:});
	end
	
	order = length(size(X));
	for i=1:order
	F(i) = size(loads{i},2);
	end
	
	for m = 1:order % for each mode
	for f=1:F(m) % for each component
	s=[];
	a = loads{m}(:,f);
	a = a /(a'*a);
	x = subtract_otherfactors(X, loads, m, f);
	for i=1:size(x(:,:),2) % for each column
	s(i)=(a'*x(:,i));
	s(i)=sign(s(i))*power(s(i),2);
	end
	S(m,f) = sum(s);
	end
	end
	
	sgns = sign(S);
	for f=1:F(1) % each component
	for i=1:size(sgns,1) % each mode
	se = length(find(sgns(:,f)==-1));
	if (rem(se,2)==0 )
	loads{i}(:,f) = sgns(i,f)*loads{i}(:,f);
	else
	disp('Odd number of negatives!')
	sgns(:,f) = handle_oddnumbers(S(:,f));
	se = length(find(sgns(:,f)==-1));
	if (rem(se,2)==0)
	loads{i}(:,f) = sgns(i,f)*loads{i}(:,f);
	else
	disp('Something Wrong!!!')
	end
	end
	end % each mode
	end % each component
	end
\end{lstlisting}
\end{strip}

\subsubsection*{Helper function: \texttt{handle\_oddnumbers}}

\begin{strip}
\begin{lstlisting}[style=matlabstyle, caption={Tie-breaking rule when the sign vote is ambiguous}, label={lst:handle-odd}]
	function sgns = handle_oddnumbers(Bcon)
	sgns = sign(Bcon);
	nb_neg = find(Bcon<0);
	[~, index] = min(abs(Bcon));
	if (Bcon(index)<0)
	sgns(index) = -sgns(index);
	% since this function is called nb_neg should be greater than 0, anyway
	elseif ((Bcon(index)>0) && (nb_neg>0))
	sgns(index) = -sgns(index);
	end
	end
\end{lstlisting}
\end{strip}

\subsubsection*{Helper function: \texttt{subtract\_otherfactors} and \texttt{outerm}}

\begin{strip}
\begin{lstlisting}[style=matlabstyle, caption={Residual computation and outer-product helper for the sign-flip routine}, label={lst:subtract-outerm}]
	function x = subtract_otherfactors(X, loads, mode, factor)
	order = length(size(X));
	x = permute(X,[mode 1:mode-1 mode+1:order]);
	loads = loads([mode 1:mode-1 mode+1:order]);
	for m = 1:order
	loads{m} = loads{m}(:, [factor 1:factor-1 factor+1:size(loads{m},2)]);
	L{m} = loads{m}(:,2:end);
	end
	M = outerm(L);
	x = x - M;
	end
	
	function mwa = outerm(facts,lo,vect)
	if nargin < 2
	lo = 0;
	end
	if nargin < 3
	vect = 0;
	end
	order = length(facts);
	if lo == 0
	mwasize = zeros(1,order);
	else
	mwasize = zeros(1,order-1);
	end
	k = 0;
	for i = 1:order
	if i ~= lo
	[m,n] = size(facts{i});
	k = k + 1;
	mwasize(k) = m;
	if k > 1
	else
	nofac = n;
	end
	end
	end
	mwa = zeros(prod(mwasize),nofac);
	for j = 1:nofac
	if lo ~= 1
	mwvect = facts{1}(:,j);
	for i = 2:order
	if lo ~= i
	mwvect = mwvect*facts{i}(:,j)';
	mwvect = mwvect(:);
	end
	end
	elseif lo == 1
	mwvect = facts{2}(:,j);
	for i = 3:order
	mwvect = mwvect*facts{i}(:,j)';
	mwvect = mwvect(:);
	end
	end
	mwa(:,j) = mwvect;
	end
	% If vect isn't one, sum up the results of the factors and reshape
	if vect ~= 1
	mwa = sum(mwa,2);
	mwa = reshape(mwa,mwasize);
	end
	end
\end{lstlisting}
\end{strip}

\subsection{MATLAB Code: Essential Data Points (EDPs) Strict Verification}\label{App:Sec-4-fundSubSpStrct}
\label{app:matlab-edp-strict}

The following script numerically verifies how the four fundamental subspaces and SVD factors of a rank-$r$ matrix $R$ change when rows and/or columns that are not convex-hull vertices of the corresponding point clouds are deleted. It supports two datasets: a random synthetic rank-$r$ matrix (\texttt{whdat=1}), and the real Henry \& Kim (1990) source-apportionment dataset~\cite{HENRY1990} (\texttt{whdat=2}), for which the convex-hull index sets were pre-computed. All derived identities (pure rotations, ambient-shrinking isomorphisms, $\Sigma$-mismatch isomorphisms, and the double-deletion composition) are checked to machine precision.

\begin{strip}
\begin{lstlisting}[style=matlabstyle, caption={Convex-hull SVD subspace verification script}, label={lst:svd-subspaces}]
	%% convex_hull_svd_subspaces.mlx
	%
	% Illustrates, numerically, how the four fundamental subspaces / SVD
	% factors of a rank-r matrix R change when we delete rows and/or columns
	% that are NOT convex-hull vertices of the corresponding point clouds.
	%
	%   R    = X *V'    ,  X  = U *S  ,   Y  = V *S       (full data)
	%   Rdr  = Xdr*Vdr' ,  Xdr = Udr*Sdr, Ydr = Vdr*Sdr   (hull-vertex ROWS kept)
	%   Rdc  = Xdc*Vdc' ,  Xdc = Udc*Sdc, Ydc = Vdc*Sdc   (hull-vertex COLUMNS kept)
	%   Rdrc = Xdrc*Vdrc', Xdrc= Udrc*Sdrc,Ydrc= Vdrc*Sdrc (both)
	%
	% All the derived identities are checked to machine precision.
	
	clear; close all; clc; rng(1);
	
	%% 1. Build a random rank-r matrix and its compact SVD
	whdat = 2;
	switch whdat
	case 1
	m = 20; n = 18; r = 3;
	A = randn(m,r); B = randn(r,n);
	R = A*B;                                  % rank-r data matrix
	case 2
	% Source compositions
	% Marine    Udust     Auto
	A = [[ 0.40000  0.01250  0.00000]; ...   % Na
	[ 0.00000  0.08840  0.01100]; ...   % Al
	[ 0.00000  0.22300  0.00820]; ...   % Si
	[ 0.40000  0.00000  0.03000]; ...   % Cl
	[ 0.01400  0.01030  0.00072]; ...   % K
	[ 0.01400  0.02440  0.01250]; ...   % Ca
	[ 0.00000  0.00640  0.00000]; ...   % Ti
	[ 0.00000  0.06000  0.02100]; ...   % Fe
	[ 0.00200  0.00020  0.05000]; ...   % Br
	[ 0.00000  0.00370  0.20000]];      % Pb
	
	% Source apportionment ug/m3
	% Marine    Udust     Auto
	B = [[   3         8        19]; ...
	[   3         8        16]; ...
	[   5         7        12]; ...
	[   6        49        13]; ...
	[   6        39         7]; ...
	[   6         9        12]; ...
	[   6        17        18]; ...
	[   2         6         6]; ...
	[   4        42         7]; ...
	[   4        49        20]; ...
	[   4        29        14]; ...
	[   5        44         9]; ...
	[   5        32        12]; ...
	[   5         9        11]; ...
	[   6        26         8]; ...
	[   3         8        14]; ...
	[   2        39        11]; ...
	[   3         6        20]; ...
	[   4        48        15]; ...
	[   5        37        14]];
	
	R = A*B';
	m = size(R,1); n = size(R,2); r = size(A,2);
	end
	
	[U,S,V] = svds(R,r);                      % compact SVD: R = U*S*V'
	X = U*S;                                  % "row scores"    (m x r)
	Y = V*S;                                  % "column scores" (n x r)
	assert(norm(R-X*V','fro')<1e-9)
	assert(norm(R-U*Y','fro')<1e-9)
	
	%% 2. Find convex-hull vertices of the row-points (X) and column-points (Y)
	switch whdat
	case 1
	I = sort(unique(convhulln(X)));           % row indices to KEEP
	J = sort(unique(convhulln(Y)));           % column indices to KEEP
	case 2
	IndsH3.XnInn = [1 4; 4 9; 9 10; 10 7; 7 1];
	I = unique(IndsH3.XnInn);
	IndsH3.YnInn = [5 15; 15 6; 6 3; 3 18; 18 17; 17 9; 9 5];
	J = unique(IndsH3.YnInn);
	% I = sort(unique(convhulln(X)));         % row indices to KEEP
	% J = sort(unique(convhulln(Y)));         % column indices to KEEP
	end
	fprintf('Kept rows:    %d / %d\n', numel(I), m);
	fprintf('Kept columns: %d / %d\n', numel(J), n);
	
	%% 3. Build the three reduced matrices and their compact SVDs
	Rdr  = R(I,:);
	Rdc  = R(:,J);
	Rdrc = R(I,J);
	
	[Udr ,Sdr ,Vdr ] = svds(Rdr ,r);
	[Udc ,Sdc ,Vdc ] = svds(Rdc ,r);
	[Udrc,Sdrc,Vdrc] = svds(Rdrc,r);
	
	Xdr  = Udr*Sdr;   Ydr  = Vdr*Sdr;
	Xdc  = Udc*Sdc;   Ydc  = Vdc*Sdc;
	Xdrc = Udrc*Sdrc; Ydrc = Vdrc*Sdrc;
	
	fprintf('\nRank check (must all equal r=%d): %d %d %d\n', r, ...
	rank(Rdr), rank(Rdc), rank(Rdrc));
	
	%% TYPE 1 -- pure rotations (subspace identical, ambient space unchanged)
	%    Row space of R is exactly preserved under row deletion  -> V, Vdr
	%    Column space of R is exactly preserved under column del.-> U, Udc
	Qdr = V'*Vdr;     % should be orthogonal r x r
	Qdc = U'*Udc;     % should be orthogonal r x r
	
	fprintf('\n--- TYPE 1: pure rotations ---\n');
	fprintf('||Qdr''Qdr - I||   = %.3e (orthogonality)\n', norm(Qdr'*Qdr-eye(r)));
	fprintf('||V*Qdr  - Vdr||   = %.3e (row space of R preserved)\n', norm(V*Qdr-Vdr,'fro'));
	fprintf('||Qdc''Qdc - I||   = %.3e (orthogonality)\n', norm(Qdc'*Qdc-eye(r)));
	fprintf('||U*Qdc  - Udc||   = %.3e (col space of R preserved)\n', norm(U*Qdc-Udc,'fro'));
	
	%% TYPE 2 -- ambient-shrinking isomorphisms (general invertible, not orthogonal)
	%    U restricted to kept rows I is an injective image of U_dr (m -> m')
	%    V restricted to kept cols J is an injective image of V_dc (n -> n')
	Ndr = Udr \ U(I,:);     % U(I,:) = Udr*Ndr
	Mco = Vdc \ V(J,:);     % V(J,:) = Vdc*Mco
	
	fprintf('\n--- TYPE 2: ambient-shrinking isomorphisms ---\n');
	fprintf('||U(I,:) - Udr*Ndr||  = %.3e\n', norm(U(I,:)-Udr*Ndr,'fro'));
	fprintf('||V(J,:) - Vdc*Mco||  = %.3e\n', norm(V(J,:)-Vdc*Mco,'fro'));
	fprintf('Ndr orthogonal? ||Ndr''Ndr-I|| = %.3e (generally NOT 0)\n', norm(Ndr'*Ndr-eye(r)));
	
	%% Consequence for X and Y -- clean rotation on the RESTRICTED factor
	%    Because X = U*S and V_dr = V*Qdr, the S-factors cancel exactly:
	%       Xdr = X(I,:)*Qdr        Ydc = Y(J,:)*Qdc
	fprintf('\n--- Derived clean relations for X, Y ---\n');
	fprintf('||Xdr - X(I,:)*Qdr||  = %.3e\n', norm(Xdr - X(I,:)*Qdr,'fro'));
	fprintf('||Ydc - Y(J,:)*Qdc||  = %.3e\n', norm(Ydc - Y(J,:)*Qdc,'fro'));
	
	%% TYPE 3 -- Sigma-mismatch isomorphisms (same ambient space, values differ)
	%    X (full, m rows) vs Xdc: ambient R^m unchanged, but Sigma != Sigma_dc
	%    Y (full, n rows) vs Ydr: ambient R^n unchanged, but Sigma != Sigma_dr
	Kdc = (X'*X) \ (X'*Xdc);        % Xdc = X*Kdc
	Ldr = (Y'*Y) \ (Y'*Ydr);        % Ydr = Y*Ldr
	Kdc_theory = inv(S)*Qdc*Sdc;
	Ldr_theory = inv(S)*Qdr*Sdr;
	
	fprintf('\n--- TYPE 3: Sigma-mismatch isomorphisms ---\n');
	fprintf('||Xdc - X*Kdc||             = %.3e\n', norm(Xdc-X*Kdc,'fro'));
	fprintf('||Kdc - S^{-1}*Qdc*Sdc||    = %.3e\n', norm(Kdc-Kdc_theory,'fro'));
	fprintf('||Ydr - Y*Ldr||             = %.3e\n', norm(Ydr-Y*Ldr,'fro'));
	fprintf('||Ldr - S^{-1}*Qdr*Sdr||    = %.3e\n', norm(Ldr-Ldr_theory,'fro'));
	
	%% Double deletion (drc): composition of the above, and path-independence
	Qdrc = Vdc'*Vdrc;                % row space of R(:,J) preserved by further row-deletion
	Xdrc_theory = X(I,:)*Mco'*Qdrc;  % composed transform (col-deletion isomorphism, then row rotation)
	
	fprintf('\n--- Double deletion (drc) ---\n');
	fprintf('||Qdrc''Qdrc - I||            = %.3e (orthogonality)\n', norm(Qdrc'*Qdrc-eye(r)));
	fprintf('||Vdc*Qdrc - Vdrc||           = %.3e (subspace preserved)\n', norm(Vdc*Qdrc-Vdrc,'fro'));
	fprintf('||Xdrc - X(I,:)*Mco''*Qdrc||   = %.3e (path: rows kept -> cols isomorphism -> rotation)\n', ...
	norm(Xdrc - Xdrc_theory,'fro'));
	
	%% Summary
	fprintf(['\nSUMMARY\n' ...
	' V,  Vdr : IDENTICAL subspace of R^n   (rotation Qdr)\n' ...
	' U,  Udc : IDENTICAL subspace of R^m   (rotation Qdc)\n' ...
	' U(I,:), Udr : ISOMORPHIC (ambient shrinks m->m'', map Ndr)\n' ...
	' V(J,:), Vdc : ISOMORPHIC (ambient shrinks n->n'', map Mco)\n' ...
	' X,  Xdc : related by general invertible Kdc (Sigma changes, ambient fixed)\n' ...
	' Y,  Ydr : related by general invertible Ldr (Sigma changes, ambient fixed)\n' ...
	' Xdr = X(I,:)*Qdr   and   Ydc = Y(J,:)*Qdc   (clean, exact, always)\n' ...
	' Vdrc, Xdrc obtained by composing the column-isomorphism with a further rotation\n']);
\end{lstlisting}
\end{strip}

\section*{Acknowledgements}

Work was partially supported by the Distinguished Professor Program of Óbuda University. R. Rajk\'o is also grateful for the opportunity to use the HUN-REN Cloud \url{https://science-cloud.hu/en} (accessed on 28 August 2026)~\cite{H_der_2022} which helped achieve some specific results published in this paper.

\printbibliography

@article{HENRY1990,
	title = {Extension of self-modeling curve resolution to mixtures of more than three components: Part 1. Finding the basic feasible region},
	journal = {Chemometrics and Intelligent Laboratory Systems},
	volume = {8},
	number = {2},
	pages = {205-216},
	year = {1990},
	issn = {0169-7439},
	doi = {https://doi.org/10.1016/0169-7439(90)80136-T},
	author = {Ronald C. Henry and Bong Mann Kim}
}

@article{Pearson1901,
  author    = {Karl Pearson},
  title     = {{LIII. On lines and planes of closest fit to systems of points in space}},
  journal   = {The London, Edinburgh, and Dublin Philosophical Magazine and Journal of Science},
  volume    = {2},
  number    = {11},
  pages     = {559--572},
  year      = {1901},
  publisher = {Taylor & Francis},
  doi       = {10.1080/14786440109462720}
}

@article{Scholkopf1998,
  author    = {Sch{\"o}lkopf, Bernhard and Smola, Alexander and M{\"u}ller, Klaus-Robert},
  title     = {Nonlinear Component Analysis as a Kernel Eigenvalue Problem},
  journal   = {Neural Computation},
  volume    = {10},
  number    = {5},
  pages     = {1299--1319},
  year      = {1998},
  publisher = {MIT Press},
  doi       = {10.1162/089976698300017467}
}

@article{vandermaaten2008,
  author  = {Laurens van der Maaten and Geoffrey Hinton},
  title   = {Visualizing Data using t-SNE},
  journal = {Journal of Machine Learning Research},
  year    = {2008},
  volume  = {9},
  number  = {86},
  pages   = {2579--2605},
  url     = {http://www.jmlr.org/papers/v9/vandermaaten08a.html}
}

@article{ayesha2020,
  title={Overview and comparative study of dimensionality reduction techniques for high dimensional data},
  author={Ayesha, Shaeela and Hanif, Muhammad Kashif and Talib, Ramzan},
  journal={Information Fusion},
  volume={59},
  pages={44--58},
  year={2020},
  publisher={Elsevier},
  doi={10.1016/j.inffus.2020.01.005}
}

@article{wani2025,
  title={Comprehensive review of dimensionality reduction algorithms: challenges, limitations, and innovative solutions},
  author={Wani, Aasim Ayaz},
  journal={PeerJ Computer Science},
  volume={11},
  pages={e3025},
  year={2025},
  month={Jul},
  publisher={PeerJ Inc.},
  doi={10.7717/peerj-cs.3025},
  url={https://doi.org/10.7717/peerj-cs.3025}
}

@article{KHODADADIKARIMVAND2023,
title = {Practical and comparative application of efficient data reduction - Multivariate curve resolution},
journal = {Analytica Chimica Acta},
volume = {1243},
pages = {340824},
year = {2023},
issn = {0003-2670},
doi = {https://doi.org/10.1016/j.aca.2023.340824},
url = {https://www.sciencedirect.com/science/article/pii/S0003267023000454},
author = {Somaiyeh {Khodadadi Karimvand} and Jamile {Mohammad Jafari} and Somaye {Vali Zade} and Hamid Abdollahi},
}

@article{cordina2026,
  author    = {Cordina, Robert and Bates, Matthew},
  title     = {Interpretable Dimension Reduction and Clustering for Complex Chromatographic Datasets},
  journal   = {LCGC International -- Chromatography Online},
  year      = {2026},
  month     = {September},
  day       = {2},
  url       = {https://www.chromatographyonline.com/view/interpretable-dimension-reduction-and-clustering-for-complex-chromatographic-datasets},
  note      = {Accessed: 2026-09-07}
}

@article{GARISO2025,
title = {A comparative analysis of deep learning and chemometric approaches for spectral data modeling},
journal = {Analytica Chimica Acta},
volume = {1347},
pages = {343766},
year = {2025},
issn = {0003-2670},
doi = {https://doi.org/10.1016/j.aca.2025.343766},
url = {https://www.sciencedirect.com/science/article/pii/S0003267025001606},
author = {Rúben Gariso and João P.L. Coutinho and Tiago J. Rato and Marco S. Reis},
}

@article{Ghaffari2019,
  author = {Ghaffari, Mehdi and Omidikia, Narges and Ruckebusch, Cyril},
  title = {Essential Spectral Pixels for Multivariate Curve Resolution of Chemical Images},
  journal = {Analytical Chemistry},
  volume = {91},
  number = {17},
  pages = {10943--10948},
  year = {2019},
  doi = {10.1021/acs.analchem.9b02890},
  url = {https://doi.org}
}

@article{Ghaffari2021,
  title   = {Joint selection of essential pixels and essential variables across hyperspectral images},
  author  = {Mahdiyeh Ghaffari and Nematollah Omidikia and Cyril Ruckebusch},
  journal = {Analytica Chimica Acta},
  volume  = {1141},
  pages   = {36--46},
  year    = {2021},
  issn    = {0003-2670},
  doi     = {10.1016/j.aca.2020.10.040}
}

@article{Ruckebusch2020,
  title     = {Perspective on essential information in multivariate curve resolution},
  author    = {Ruckebusch, C. and Vitale, R. and Ghaffari, M. and Hugelier, S. and Omidikia, N.},
  journal   = {TrAC Trends in Analytical Chemistry},
  volume    = {132},
  pages     = {116044},
  year      = {2020},
  issn      = {0165-9936},
  doi       = {10.1016/j.trac.2020.116044}
}

@article{Fernandez2002,
  title = {Methods for outlier detection in prediction},
  journal = {Chemometrics and Intelligent Laboratory Systems},
  volume = {63},
  number = {1},
  pages = {27--39},
  year = {2002},
  issn = {0169-7439},
  doi = {https://doi.org},
  author = {J.A. Fernández Pierna and F. Wahl and O.E. de Noord and D.L. Massart}
}

@article{fernandez2003,
  author     = {Fern{\'a}ndez Pierna, J. A. and Jin, L. and Daszykowski, M. and Wahl, F. and Massart, D. L.},
  title      = {A methodology to detect outliers/inliers in prediction with PLS},
  journal    = {Chemometrics and Intelligent Laboratory Systems},
  volume     = {68},
  number     = {1-2},
  pages      = {17--28},
  year       = {2003},
  publisher = {Elsevier},
  doi        = {10.1016/S0169-7439(03)000844}
}

@article{Jin2003,
  title = {The Law of Mixtures method for multivariate calibration},
  journal = {Analytica Chimica Acta},
  volume = {477},
  number = {1},
  pages = {3-15},
  year = {2003},
  issn = {0003-2670},
  doi = {https://doi.org},
  url = {https://sciencedirect.com},
  author = {L. Jin and J.A. Fernández Pierna and F. Wahl and P. Dardenne and D.L. Massart}
}

@article{Jin2003Delaunay,
  title = {Delaunay triangulation method for multivariate calibration},
  journal = {Analytica Chimica Acta},
  volume = {489},
  number = {2},
  pages = {195-206},
  year = {2003},
  issn = {0003-2670},
  doi = {10.1016/S0003-2670(03)00629-9},
  author = {L. Jin and J.A. Fernández Pierna and Q. Xu and F. Wahl and O.E. de Noord and C.A. Saby and D.L. Massart}
}

@article{JIN2006,
  title = {Updating multivariate calibration with the Delaunay triangulation method: The creation of a new local model},
  journal = {Chemometrics and Intelligent Laboratory Systems},
  volume = {80},
  number = {1},
  pages = {124-135},
  year = {2006},
  issn = {0169-7439},
  doi = {https://doi.org},
  author = {L. Jin and Q.S. Xu and J. Smeyers-Verbeke and D.L. Massart}
}

@manual{corona2010,
  author    = {Corona, Francesco and Liiti{"{a}}inen, Elia and Lendasse, Amaury and Baratti, Roberto and Sassu, Lorenzo},
  title     = {A continuous regression function for the {D}elaunay calibration method},
  journal   = {IFAC Proceedings Volumes},
  volume    = {43},
  number    = {5},
  pages     = {201--206},
  year      = {2010},
  doi       = {10.3182/20100607-3-IN-4014.00036},
  url       = {https://www.sciencedirect.com/science/article/pii/S1474667016300337},
  publisher = {Elsevier}
}

@article{Carlosena1995,
    author = {Carlosena, A. and Andrade, J. M. and Kubista, M. and Prada, D.},
    title = {Procrustes Rotation as a Way To Compare Different Sampling Seasons in Soils},
    journal = {Analytical Chemistry},
    volume = {67},
    number = {14},
    pages = {2373-2378},
    year = {1995},
    month = {05},
    issn = {0003-2700},
    doi = {10.1021/ac00110a008},
    url = {https://doi.org/10.1021/ac00110a008}
}

@article{ANDRADE2004,
title = {Procrustes rotation in analytical chemistry, a tutorial},
journal = {Chemometrics and Intelligent Laboratory Systems},
volume = {72},
number = {2},
pages = {123-132},
year = {2004},
issn = {0169-7439},
doi = {https://doi.org/10.1016/j.chemolab.2004.01.007},
url = {https://www.sciencedirect.com/science/article/pii/S0169743904000152},
author = {Jose Manuel Andrade and Marı́a P. Gómez-Carracedo and Wojtek Krzanowski and Mikael Kubista}
}

@article{Ivosev2008,
    author = {Ivosev, Gordana and Burton, Lyle and Bonner, Ron},
    title = {Dimensionality Reduction and Visualization in Principal Component Analysis},
    journal = {Analytical Chemistry},
    volume = {80},
    number = {13},
    pages = {4933-4944},
    year = {2008},
    month = {06},
    issn = {0003-2700},
    doi = {10.1021/ac800110w}
}

@article{GONCALVES2023,
title = {Exploring the scores: {P}rocrustes analysis for comprehensive exploration of multivariate data},
journal = {Chemometrics and Intelligent Laboratory Systems},
volume = {238},
pages = {104841},
year = {2023},
issn = {0169-7439},
doi = {https://doi.org/10.1016/j.chemolab.2023.104841},
url = {https://www.sciencedirect.com/science/article/pii/S0169743923000916},
author = {Thays R. Gonçalves and Peter D. Wentzell and Makoto Matsushita and Patrícia Valderrama}
}

@article{andreella2022,
  author    = {Andreella, Angela and Finos, Livio},
  title     = {Procrustes Analysis for High-Dimensional Data},
  journal   = {Psychometrika},
  volume    = {87},
  number    = {4},
  pages     = {1422--1438},
  year      = {2022},
  publisher = {Springer},
  doi       = {10.1007/s11336-022-09859-5},
  url       = {https://doi.org/10.1007/s11336-022-09859-5}
}

@article{kucheryavskiy2020,
  author    = {Kucheryavskiy, Sergey and Zhilin, Sergei and Rodionova, Oxana and Pomerantsev, Alexey},
  title     = {Procrustes Cross-Validation—A Bridge between Cross-Validation and Independent Validation Sets},
  journal   = {Analytical Chemistry},
  volume    = {92},
  number    = {17},
  pages     = {11842--11850},
  year      = {2020},
  doi       = {10.1021/acs.analchem.0c02175},
  publisher = {ACS Publications}
}

@article{rajko2005,
author = {Rajk\'{o}, R\'{o}bert and Istv\'{a}n, Krisztina},
title = {Analytical solution for determining feasible regions of self-modeling curve resolution (SMCR) method based on computational geometry},
journal = {Journal of Chemometrics},
volume = {19},
number = {8},
pages = {448-463},
doi = {https://doi.org/10.1002/cem.947},
url = {https://analyticalsciencejournals.onlinelibrary.wiley.com/doi/abs/10.1002/cem.947},
eprint = {https://analyticalsciencejournals.onlinelibrary.wiley.com/doi/pdf/10.1002/cem.947},
year = {2005}
}

@article{BORGEN1985,
title = {An extension of the multivariate component-resolution method to three components},
journal = {Analytica Chimica Acta},
volume = {174},
pages = {1-26},
year = {1985},
issn = {0003-2670},
doi = {https://doi.org/10.1016/S0003-2670(00)84361-5},
url = {https://www.sciencedirect.com/science/article/pii/S0003267000843615},
author = {Odd S. Borgen and Bruce R. Kowalski}
}

@misc{rajko2017,
  author       = {Rajk\'{o}, R\'{o}bert},
  title        = {Use and abuse of curve resolution in chemometrics},
  howpublished = {Invited lecture (45 min.) \url{https://m2.mtmt.hu/gui2/?mode=browse&params=publication;3283645}},
  year         = {},
  month        = {April},
  address      = {Newcastle, Australia},
  note         = {Topics in Chemometrics TIC2017, 18-21 April 2017},
}

@book{strang2023,
  author    = {Strang, Gilbert},
  title     = {Introduction to Linear Algebra},
  edition   = {6th},
  year      = {2023},
  publisher = {Wellesley-Cambridge Press},
  address   = {Wellesley, MA},
  isbn      = {978-1-7331466-7-8}
}

@article{OLARINI2024,
title = {Exploratory analysis of hyperspectral imaging data},
journal = {Chemometrics and Intelligent Laboratory Systems},
volume = {252},
pages = {105174},
year = {2024},
issn = {0169-7439},
doi = {https://doi.org/10.1016/j.chemolab.2024.105174},
author = {Alessandra Olarini and Marina Cocchi and Vincent Motto-Ros and Ludovic Duponchel and Cyril Ruckebusch}
}

@article{QING2024,
title = {Essential spectral pixels-based improvement of UMAP classifying hyperspectral imaging data to identify minor compounds in food matrix},
journal = {Talanta},
volume = {273},
pages = {125845},
year = {2024},
issn = {0039-9140},
doi = {https://doi.org/10.1016/j.talanta.2024.125845},
author = {Xiangdong Qing and Guiying Lu and Xiaohua Zhang and Qingling Chen and Xiaohong Zhou and Wei He and Ling Xu and Jin Zhang}
}

@article{Coic2023,
    author = {Coic, Laureen and Vitale, Raffaele and Moreau, Myriam and Rousseau, David and de Morais Goulart, José Henrique and Dobigeon, Nicolas and Ruckebusch, Cyril},
    title = {Assessment of Essential Information in the Fourier
Domain to Accelerate Raman Hyperspectral Microimaging},
    journal = {Analytical Chemistry},
    volume = {95},
    number = {42},
    pages = {15497-15504},
    year = {2023},
    month = {10},
    issn = {0003-2700},
    doi = {10.1021/acs.analchem.3c01383}
}

@article{Beyramysoltan2021,
    author = {Beyramysoltan, Samira and Abdollahi, Hamid and Musah, Rabi A.},
    title = {Workflow for the Supervised Learning of Chemical Data:
Efficient Data Reduction-Multivariate Curve Resolution (EDR-MCR)},
    journal = {Analytical Chemistry},
    volume = {93},
    number = {12},
    pages = {5020-5027},
    year = {2021},
    month = {03},
    issn = {0003-2700},
    doi = {10.1021/acs.analchem.0c01427}
}

@article{Cheng2005,
author = {Cheng, H. and Gimbutas, Z. and Martinsson, P. G. and Rokhlin, V.},
title = {On the Compression of Low Rank Matrices},
journal = {SIAM Journal on Scientific Computing},
volume = {26},
number = {4},
pages = {1389-1404},
year = {2005},
doi = {10.1137/030602678}
}

@article{Mahoney2009,
author = {Michael W. Mahoney  and Petros Drineas },
title = {CUR matrix decompositions for improved data analysis},
journal = {Proceedings of the National Academy of Sciences},
volume = {106},
number = {3},
pages = {697-702},
year = {2009},
doi = {10.1073/pnas.0803205106}
}

@article{Cortinovis2020,
author = {Cortinovis, Alice and Kressner, Daniel},
title = {Low-Rank Approximation in the Frobenius Norm by Column and Row Subset Selection},
journal = {SIAM Journal on Matrix Analysis and Applications},
volume = {41},
number = {4},
pages = {1651-1673},
year = {2020},
doi = {10.1137/19M1281848}
}

@ARTICLE{Ding2018,
  author={Ding, Shuguang and Nie, Xiangli and Qiao, Hong and Zhang, Bo},
  journal={IEEE Transactions on Neural Networks and Learning Systems}, 
  title={A Fast Algorithm of Convex Hull Vertices Selection for Online Classification}, 
  year={2018},
  volume={29},
  number={4},
  pages={792-806},
  doi={10.1109/TNNLS.2017.2648038}
}

@book{ROCKAFELLAR1970,
 ISBN = {9780691015866},
 URL = {http://www.jstor.org/stable/j.ctt14bs1ff},
 author = {R. Tyrrell Rockafellar},
 publisher = {Princeton University Press},
 title = {Convex Analysis},
 urldate = {2026-09-14},
 year = {1970}
}

@article{Vitale2024,
author = {Vitale, Raffaele and Azizi, Azar and Ghaffari, Mahdiyeh and Omidikia, Nematollah and Ruckebusch, Cyril},
title = {Three-Way Data Reduction Based on Essential Information},
journal = {Journal of Chemometrics},
volume = {38},
number = {12},
pages = {e3617},
doi = {https://doi.org/10.1002/cem.3617},
year = {2024}
}

@article{H_der_2022,
	doi = {10.22503/inftars.xxii.2022.2.8},
	url = {https://doi.org/10.22503/inftars.XXII.2022.2.8},
	year = 2022,
	month = {aug},
	publisher = {Informacios Tarsadalom},
	volume = {22},
	number = {2},
	pages = {128-137},
	author = {Mih{\'{a}}ly H{\'{e}}der and Ern{\H{o}} Rig{\'{o}} and Dorottya Medgyesi and R{\'{o}}bert Lovas and Szabolcs Tenczer and Ferenc T{\"o}r{\"o}k and Attila Farkas and M{\'{a}}rk Em{\H{o}}di and J{\'{o}}zsef Kadlecsik and Gy{\"o}rgy Mez{\H{o}} and {\'{A}}d{\'{a}}m Pint{\'{e}}r and P{\'{e}}ter Kacsuk},
	title = {The Past, Present and Future of the {ELKH} Cloud},
	journal = {Inform{\'{a}}ci{\'{o}}s T{\'{a}}rsadalom}
}

@article{AZZOUZ2008,
title = {Application of multivariate curve resolution alternating least squares (MCR-ALS) to the quantitative analysis of pharmaceutical and agricultural samples},
journal = {Talanta},
volume = {74},
number = {5},
pages = {1201-1210},
year = {2008},
issn = {0039-9140},
doi = {https://doi.org/10.1016/j.talanta.2007.08.024},
author = {T. Azzouz and R. Tauler},
}

@article{RAJKO2024,
title = {On problematic practice of using normalization in self-modeling/multivariate curve resolution (S/MCR)},
journal = {Chemometrics and Intelligent Laboratory Systems},
volume = {244},
pages = {105033},
year = {2024},
issn = {0169-7439},
doi = {https://doi.org/10.1016/j.chemolab.2023.105033},
author = {Róbert Rajkó}
}


\end{document}